\documentclass[reprint, aps, prl, superscriptaddress]{revtex4-2}
\usepackage{graphicx, float, xcolor}
\graphicspath{{fig/}{../../figs/fi_fig/}{../../figs/recover/}}
\usepackage[ruled,linesnumbered]{algorithm2e}
\usepackage{physics}
\usepackage{hyperref}
\usepackage{amsthm, amssymb, bm}
\usepackage[T1]{fontenc}
\usepackage{xcolor}

\definecolor{revisioncolor}{RGB}{0,90,180}
\newcommand{\rev}[1]{\textcolor{black}{#1}}

\let\originalsection\section

\renewcommand{\section}[1]{\par\medskip\noindent\textbf{{\textit{#1.}---}}}
\newcommand{\ii}{\mathrm{i}}
\newcommand{\myqc}{\text{HP}}
\newcommand{\uqc}[1]{U_{\mathrm{HP}}^{(#1)}}
\newcommand{\cpgate}[3]{\mathrm{CP}_{#1, #2}(#3)}
\newcommand{\fii}{\mathrm{DFI}}
\newcommand{\Lambdacomp}[1]{\bar{\Lambda}_{\leq #1}}

\newtheorem{definition}{Definition}
\newtheorem{theorem}{Theorem}
\newtheorem{proposition}{Proposition}
\newtheorem{conjecture}{Conjecture}
\newtheorem*{restatedtheorem}{Theorem}
\newtheorem*{restatedproposition}{Proposition}

\begin{document}
\title{$\mathcal{O}(n)$ alternative to Quantum Fourier Transform with efficient neural net classical post-processing}

\author{Kaiming Bian}
\thanks{These two authors contributed equally.}
\affiliation{Shenzhen Institute for Quantum Science and Engineering, Southern University of Science and Technology, Shenzhen, 518055, China}
\affiliation{International Quantum Academy, Shenzhen, 518048, China}

\author{Zujin Wen}
\thanks{These two authors contributed equally.}
\affiliation{Department of Physics, City University of Hong Kong, Tat Chee Avenue, Kowloon, Hong Kong SAR, China}
\author{Oscar Dahlsten}
\affiliation{Department of Physics, City University of Hong Kong, Tat Chee Avenue, Kowloon, Hong Kong SAR, China}
\date{\today}

\begin{abstract}
The Quantum Fourier Transform (QFT) is \rev{employed} by hidden subgroup problem (HSP) algorithms, including Shor's algorithm for factoring. The circuit depth of the QFT remains challenging for near-term hardware. To find shallower alternatives we identify two properties that are exploited by the QFT to enable HSP. Firstly, the shift invariance of the QFT allows for the removal of a random overall shift. Secondly, the QFT retains information about the hidden subgroup generator accessible in the measurement outcomes. We quantify that information via the discrete Fisher information. We construct a family of shallow circuits using Hadamards and controlled-Phase gates, HP-$L$ circuits, that we prove preserve shift invariance. Numerical analysis shows these circuits retain exponentially growing Fisher information. \rev{The $\mathcal{O}(n)$ HP-$1$ is employed in place of the $\mathcal{O}(n^2)$ QFT in our numerical implementation of Shor's algorithm.} An efficient neural network is used for the corresponding classical post-processing.
\end{abstract}

\maketitle

\section{Introduction}The $\mathcal{O}(n^2)$ circuit depth required by the standard Quantum Fourier Transform (QFT) presents experimental challenges. The QFT is used in the exponential quantum speedups of landmark protocols—such as Shor's algorithms for prime factorization and discrete logarithms \cite{Simon1997, Shor1994, Shor1997}. The associated family of algorithms, all using the QFT, is known as the (quantum) hidden subgroup problem (HSP) algorithms \cite{Kitaev1995, Jozsa2001, Lomont2004, liu2024information}. The experimental implementation of the QFT remains challenging. For a $100$-qubit system ($n=100$), the QFT demands thousands of circuit layers, which vastly exceeds current hardware capabilities. For instance, IBM's $127$-qubit Eagle processor is limited to a depth of $60$ layers even with extensive error mitigation \cite{kim2023evidence}; Google's $105$-qubit Willow processor reaches its operational threshold at around $40$ layers~\cite{google2025quantum}; USTC's $105$-qubit Zuchongzhi~3.0 processor demonstrated random circuit sampling on $83$-qubits, with around $32$ layers~\cite{Zuchongzhi}. Moreover, for quantum error-mitigation protocols that are commonly used, the sampling overhead can scale exponentially with the circuit depth~\cite{takagi2022fundamental, takagi2023universal}.

There has been significant progress in reducing the circuit depth required. These efforts have enabled record experimental QFT fidelities \cite{aumann2026demonstrating} through techniques including \rev{(i) approximate or truncated QFT constructions, including those that omit small-angle controlled rotations \cite{Coppersmith2002, park2023reducing}, as well as $\mathcal{O}(n\log n)$-step algorithms for approximating the QFT over cyclic groups $\mathbb{Z}_p$, where $n=\log p$ for arbitrary $p$\cite{HalesHallgren2000}; (ii) parallel circuit constructions, which implement an approximate QFT over $\mathbb{Z}_{2^n}$ with error $\epsilon$ in depth $\mathcal{O}(\log n+\log\log(1/\epsilon))$ \cite{CleveWatrous2000}; }(iii) dynamic mid-circuit measurements, which reduce the resource scaling of QFT followed immediately by measurement from \(\mathcal{O}(n^2)\) two-qubit gates to \(\mathcal{O}(n)\) mid-circuit measurements without connectivity constraints \cite{baumer2024quantum}, and (iv) quantum circuit cutting and cluster-simulation methods, which trade coherent quantum depth for additional sampling and classical post-processing overhead~\cite{tang2021cutqc, Peng2020}. 

We here take a different approach to reducing the circuit depth required. Rather than approximate the QFT we identify key features of the QFT that make it compatible with the HSP. We focus on shift-invariance combined with retention of information about the generator in question. Fig.~\ref{fig:illus}(a) summarizes the main idea. We design non-QFT circuits that retain those key features.

\begin{figure}[htbp]
    \centering
    \includegraphics[width=0.8\columnwidth]{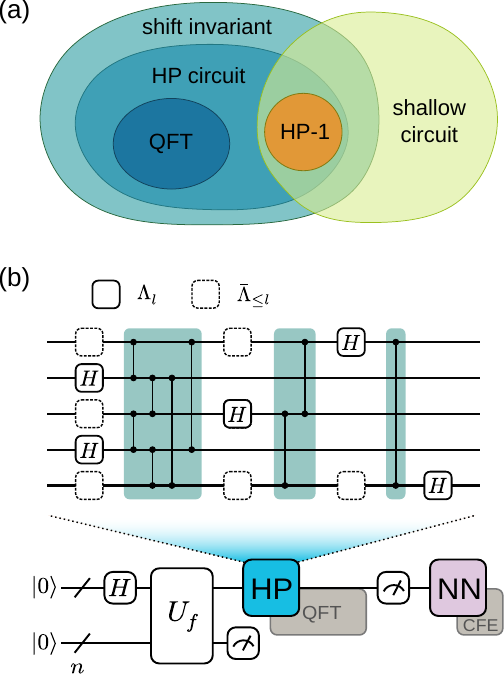}
    \caption{\label{fig:illus} \textbf{Shallow shift-invariant alternative to the QFT.} (a) Shift invariance is a critical property of the QFT. We design a circuit family, denoted as \myqc-$L$, that preserves the shift invariance. Within this broader family, \myqc-1 specifically serves as a shallow-depth circuit that is experiment-friendly. (b) In the proposed HSP algorithmic routine, the deep QFT block (the left gray gate) is directly replaced by the \myqc~circuit. A neural network then processes the measurement outcomes, bypassing the standard continued fraction expansion typically used in the standard HSP. The HP circuit is composed of Hadamard and phase gates, with $\Lambda_l$ referring to the qubits with Hadamards applied in layer $l$. }
\end{figure}

We construct a circuit family, called \myqc-$L$ circuits, that preserves shift invariance. In particular, the \myqc-$1$ circuits are shallow and retain information about the generator. The \myqc-$L$ circuits interleave transversal Hadamard (H) gates with $L$ layers of phase (P) blocks. We prove analytically that \myqc-$L$ circuits are shift-invariant \rev{ under $\mathbb{Z}_{2^n}$}. Numerical scaling analysis demonstrates exponentially increasing retention of Fisher information $\sim e^{kn}$, albeit with a worse exponent $k$ than QFT. That $\sim e^{kn}$-scaling consistent with efficient scaling of the number of samples required. 

We numerically simulate replacing the QFT in Shor's algorithm with \myqc-1 circuits, and using an efficiently scaling neural network~\cite{zaheer2017deep} for classical post-processing. Factorisation is successfully achieved.  

\section{Necessary condition for \rev{HSP:} Shift Invariance}\rev{The standard quantum HSP algorithm~\cite{Shor1994, Shor1997, Simon1997, Jozsa2001} is illustrated in Fig.~\ref{fig:illus}(b) with gray blocks. The elements $x$ belong to a finite abelian group $G$ with group operation denoted as +. We are guaranteed that $f(x+r)=f(x)$ for the unknown  $r$ for all $x$, and thus that $f(x+r+r+...)=f(x)$. (In other words, the function $f(\cdot)$ is constant on cosets of the subgroup generated by $r$.) $f(\cdot)$ is said to {\em hide} $r$. The task is to find $r$.  After $U_f$ (Fig.~\ref{fig:illus}(b)), the state is proportional to $\sum_x \ket{x}\ket{f(x)}$.  Upon measuring (or imagining measuring) the second register and randomly finding $\ket{f(c)}$ for some $c$ the state becomes proportional to $\sum_{q\in \{0,1,2...\}} \ket{c+qr}\ket{f(c)}$. The shift $c$ then acts as a random (coset) offset. Removing this $c$ from the computational basis statistics is a key step in the quantum HSP, and is normally effected by applying the (inverse) quantum fourier transform (QFT). The output of the QFT, in terms of the statistics in the computational basis, is invariant under shifts by c, as will be detailed below. Additional details about the HSP algorithm are provided in Appendix~\ref{append: hsp and fs}.}

We now define shift invariance~\cite{NielsenChuang2000} for HSP sampling and derive a necessary condition that any shift-invariant unitary must satisfy. 

\begin{definition}[Shift invariance]
    A unitary $U$ is shift-invariant for HSP sampling if
    \begin{equation}
    \label{eq:shift invariant def}
        \left|\bra{x}U\ket{V}\right|^2
        =
        \left|\bra{x}U \ket{c+V}\right|^2,
        ~~
        \forall x,c \in G,\ \forall V \le G,
    \end{equation}
\rev{where, up to normalisation, $\ket{V} :=\sum_{q\in \{0,1,2...\}} \ket{qr}$  and $\ket{V+c} := \sum_{q\in \{0,1,2...\}} \ket{qr+c}$.}

\end{definition}

To identify shift-invariant circuits beyond the QFT, we first establish a necessary condition for shift invariance.
\begin{proposition}[Necessary condition of shift invariance]
\label{thm:necessary_modulus_condition}
    Suppose $U$ is a unitary on the group register that satisfies the shift-invariance condition above. Then, for every $x,y\in G$ there exist phases $\theta_{xy}\in\mathbb{R}$ such that
    \begin{equation}
        \bra{x}U\ket{y}=\frac{1}{\sqrt{|G|}}e^{\ii\theta_{xy}}.
    \label{eq:necessarycondition}
    \end{equation}
    
\end{proposition}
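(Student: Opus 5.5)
The plan is to apply the shift-invariance condition \eqref{eq:shift invariant def} to the smallest admissible subgroup, the trivial subgroup $V=\{0\}$. Then $\ket{V}=\ket{0}$ and $\ket{c+V}=\ket{c}$, with no normalisation subtleties. Since $c$ ranges over all of $G$, the condition reads
\begin{equation*}
\left|\bra{x}U\ket{0}\right|^2=\left|\bra{x}U\ket{c}\right|^2,\qquad \forall x,c\in G .
\end{equation*}
So, for each fixed output $x$, the modulus $|\bra{x}U\ket{y}|$ does not depend on the input $y$. Write $a_x:=|\bra{x}U\ket{0}|$ for this common value.

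Next I will use unitarity in the row direction. The rows of $U$ are normalised, i.e.\ $UU^\dagger=\mathbb{1}$ gives $\sum_{y\in G}|\bra{x}U\ket{y}|^2=1$. Because every term in this sum equals $a_x^2$, we get $|G|\,a_x^2=1$, hence $a_x=1/\sqrt{|G|}$ for every $x$. Therefore every matrix element has modulus $1/\sqrt{|G|}$, and writing each element in polar form gives the phases $\theta_{xy}\in\mathbb{R}$ of \eqref{eq:necessarycondition}. As a consistency check, column normalisation $\sum_x a_x^2=1$ then also holds automatically.

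The only real obstacle is interpretive rather than technical. I must check that the trivial subgroup (generator $r=0$, the single state $\ket{0}$) is admissible under the paper's definition, which quantifies over all $V\le G$. If one insisted on nontrivial $V$, I would instead take $V=\langle r\rangle$ for a small $r$ and try to combine several subgroups to isolate single columns. That route is considerably messier, and would need extra structure on $G$, for example a Möbius-type inversion over subgroups. Since the definition explicitly says $\forall V\le G$, I will rely on the trivial subgroup, and the proof is then just the two steps above.
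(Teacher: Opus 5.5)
Your proposal is correct and follows the paper's proof essentially step for step: specialize to the trivial subgroup $V=\{0\}$ to get constant moduli along each row of $U$, then use row normalization from unitarity to fix the common modulus at $1/\sqrt{|G|}$, and finally take polar forms. Your concern about whether $V=\{0\}$ is admissible resolves as you say, since the definition quantifies over all $V\le G$, and the paper uses exactly this specialization.
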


\begin{proof}[Proof sketch of Proposition~\ref{thm:necessary_modulus_condition}]
The proof hinges on analyzing the shift-invariance condition for the trivial subgroup $V = \{0\}$. For $V = \{0\}$, the state $\ket{V}$ is the computational basis state $\ket{0}$, and the shifted state is $\ket{c+V} = \ket{c}$. The requirement (Eq.\eqref{eq:shift invariant def}) that $\left|\bra{x}U\ket{0}\right|^2 = \left|\bra{x}U\ket{c}\right|^2$ for all $x, c \in G$ implies that, for any fixed output $x$, the magnitude of every matrix element in the $x$-th row of $U$ must be identical. The special case induces the uniform magnitude $\bra{x}U\ket{y} = 1/\sqrt{|G|} e^{\ii \theta_{xy}}$. The full details of this argument are provided in Appendix~\ref{app:necessary condition of sinv}.
\end{proof}

\section{L-phase-block Hadamard-Phase circuits (\myqc): A shift-invariant circuit family}Proposition \ref{thm:necessary_modulus_condition} suggests a simple design principle for constructing shift-invariant circuits. The condition \(|\langle x|U|y\rangle|=1/\sqrt{|G|}\) requires flat amplitude magnitudes, which are naturally generated by Hadamard gates. The remaining freedom lies in the relative phases \(\theta_{xy}\), which can be adjusted by controlled-phase gates ($\mathrm{CP}(\theta)=\ket{00}\bra{00}+\ket{01}\bra{01}+\ket{10}\bra{10}+e^{\ii\theta}\ket{11}\bra{11}$) without changing these magnitudes. This motivates the Hadamard-phase (HP) circuit family we shall now introduce. As illustrated in Fig.~\ref{fig:illus} (b), the \myqc-$L$ circuit is structured as an alternating sequence of Hadamard gates and phase blocks. To respect shift invariance, we enforce three rules for \myqc-$L$: (i) each qubit is acted on by a Hadamard gate exactly once; (ii) each qubit pair is connected by at most one controlled-phase gate; and (iii) each controlled-phase gate must come together with 2 Hadamards, one before and one after the phase gate. Here, ``once'' refers to the entire $L$-layer circuit rather than to a single layer.

To express the allowed circuits mathematically, we define two types of sets of qubits.
Let \(H_i\) denote the Hadamard gate applied to qubit
\(i\in\{1,2,\ldots,n\}\). For each layer \(l\), let
\(\Lambda_l\) denote the set of qubits on which
Hadamard gates are applied in that layer. The qubits that are still allowed in terms of having a Hadamard applied to them after layer $l$, i.e.\ those that have not had a Hadamard applied to them in or before layer $l$, are denoted as the set \(\Lambdacomp{l}\).

In-between the Hadamard layers, the two sets are coupled via a controlled-phase block, denoted as $\cpgate{\Lambda_l}{\Lambdacomp{l}}{\bm{\theta}_{\Lambda_l\Lambdacomp{l}}}$, which applies pairwise operations $\mathrm{CP}_{jk}(\theta_{jk})$ between each qubit $j \in \Lambda_l$ and $k \in \Lambdacomp{l}$. \rev{Such 
pairwise controlled-phase operations should, intuitively, satisfy the necessary condition in Eq.(\ref{eq:necessarycondition}), which allows for the presence of  $e^{i\theta_{xy}}$ terms. Building upon this notation, we formally define the circuit architecture:}
\begin{definition}[\myqc-$L$ circuit]
    A quantum circuit is classified as an \myqc-L circuit if it is constructed using the following layer-wise architecture
    \begin{equation}
    \label{eq:HP}
        \uqc{L}({\bm{\theta}}) = \bigotimes_{i \in \Lambda_L} H_{i}\prod_{l=L-1}^{0} \left( \cpgate{\Lambda_l}{\Lambdacomp{l}}{\bm{\theta}_{\Lambda_l\Lambdacomp{l}}} \bigotimes_{i \in \Lambda_l} H_{i} \right),
    \end{equation}
    where the controlled-phase block is given by
    \begin{equation}
        \cpgate{\Lambda_l}{\Lambdacomp{l}}{\bm{\theta}_{\Lambda_l\Lambdacomp{l}}} = \prod_{j \in \Lambdacomp{l}} \prod_{k \in \Lambda_l} \mathrm{CP}_{jk}(\theta_{jk}).
    \end{equation}
    The parameter set $\bm{\theta}_{\Lambda_l\Lambdacomp{l}}$ is the collection of $\theta_{jk}$, $j\in \Lambda_l$, $k \in \Lambdacomp{l}$. 
\end{definition}

In the following, we show that HP circuits preserve shift invariance. This implies that such invariance is not unique to the QFT, but is a feature that arises in a wider family of quantum circuits. \rev{We focus on the cyclic group $\mathbb{Z}_{2^n}=\{0,1,\ldots,2^n-1\}$, equipped with addition modulo $2^n$, because it is convenient by virtue of being naturally represented by an $n$-qubit computational basis.}

\begin{theorem}
\label{thm:hp_circuits_are_sinv}
    \myqc-$L$ circuits are shift-invariant for $\mathbb{Z}_{2^{n}}$.
\end{theorem}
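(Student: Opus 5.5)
The plan is to first compute the matrix elements of $\uqc{L}$ in closed form, and then exploit the very special structure of subgroups of $\mathbb{Z}_{2^n}$ and their cosets. Write $x=(x_1,\dots,x_n)$ and $y=(y_1,\dots,y_n)$ for the binary digits of the output and input labels.

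\textbf{Step 1: closed form for the matrix elements.} I would follow a basis state $\ket{y}$ through the circuit of Eq.~\eqref{eq:HP}. By rule (i), each qubit $i$ receives exactly one Hadamard. Before it, the qubit carries its input bit $y_i$; after it, only diagonal gates act on it, so it carries the output bit $x_i$. A gate $\mathrm{CP}_{jk}(\theta_{jk})$ in block $l$ couples $j\in\Lambda_l$, which has already been Hadamarded, to $k\in\Lambdacomp{l}$, which has not. It is diagonal, so it contributes the phase $e^{\ii\theta_{jk}x_jy_k}$. Each Hadamard contributes $\tfrac{1}{\sqrt2}(-1)^{x_iy_i}$. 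Summing over the intermediate paths, which are unique because all gates between Hadamards are diagonal, should give
\begin{equation*}
\bra{x}\uqc{L}\ket{y}=\frac{1}{\sqrt{2^n}}\exp\Big(\ii\sum_{b=1}^n w_b(x)\,y_b\Big),\qquad w_b(x)=\pi x_b+\sum_{j}\theta_{jb}x_j,
\end{equation*}
where the sum in $w_b(x)$ runs over those $j$ Hadamarded in an earlier layer than $b$. The point is that, for fixed $x$, the phase is \emph{linear in the bits} of $y$. This also recovers the necessary condition of Proposition~\ref{thm:necessary_modulus_condition}.

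\textbf{Step 2: structure of subgroups and cosets.} Every subgroup of $\mathbb{Z}_{2^n}$ is cyclic, of the form $V=2^m\mathbb{Z}_{2^n}$ with $m$ the $2$-adic valuation of the generator $r$. Consequently, $V$ is exactly the set of $y$ whose $m$ lowest bits vanish, while the $n-m$ high bits range freely. For a shift $c$, the coset $c+V$ equals $\{y : y \bmod 2^m = c \bmod 2^m\}$. So $c+V$ fixes the low bits to those of $c$ and leaves the high bits free: carries from adding $c$ only permute the high part. I also need to check that $\ket{V}$ in the definition is the uniform superposition over the distinct elements of $V$; the repetition in $q$ just gives an overall factor.

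\textbf{Step 3: factorization.} With $\mathrm{lo}$ and $\mathrm{hi}$ denoting the low $m$ and high $n-m$ bit positions, the bit-linearity of Step~1 makes the sum factorize:
\begin{equation*}
\sum_{y\in c+V}e^{\ii\sum_b w_b y_b}=e^{\ii\sum_{b\in\mathrm{lo}}w_b c_b}\prod_{b\in\mathrm{hi}}\big(1+e^{\ii w_b}\big).
\end{equation*}
The prefactor is a pure phase, so the modulus is independent of $c$, and Eq.~\eqref{eq:shift invariant def} follows for all $x$, $c$ and $V$.

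The main obstacle is Step~1: one must carefully verify, using rules (ii)--(iii) and the layer ordering, that each controlled-phase gate really sees an \emph{output} bit on one end and an \emph{input} bit on the other, so that no $y_jy_k$ or $x_jx_k$ cross terms appear. Such terms would break the factorization. Terms that depend only on $x$ would be harmless anyway, since they contribute only a global phase for fixed $x$.
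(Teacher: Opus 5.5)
Your proposal is correct and follows essentially the same route as the paper. The paper likewise derives the closed form $\bra{x}\uqc{L}\ket{y}=2^{-n/2}\exp\bigl[\ii\sum_{i,j}\tilde{\theta}_{ij}x_iy_j\bigr]$ with a directed phase matrix (your $w_b(x)$), uses that a coset of $V=2^{s}\mathbb{Z}_{2^n}$ fixes the low bits to those of $c$ and leaves the high bits free, and factors out the $c$-dependent pure phase. Your explicit product $\prod_{b\in\mathrm{hi}}(1+e^{\ii w_b})$ and your treatment of arbitrary $c$ (the paper takes $0\le c<2^s$) are minor refinements of the same argument.
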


The proof derives the HP-L state analytically and verifies its shift invariance (Appendix~\ref{app:HPk are sinv}).

\rev{
The shift invariance of HP circuits can be compared to that of the QFT. For HP circuits Theorem~\ref{thm:hp_circuits_are_sinv} establishes exact shift invariance for the case of $\mathbb{Z}_{2^n}$. Moreover we give numerical evidence of approximate shift invariance more generally (Appendix H). The QFT is analytically proven to have exact shift invariance for the case of $\mathbb{Z}_{2^n}$, and analytically proven to have approximate shift invariance more generally~\cite{Shor1997} (see also Appendix H). In the approximate regime, the QFT appears to have better shift invariance. As shall be shown numerically, the HP circuit's invariance is nevertheless sufficient for factoring in the instances investigated. }

\section{\rev{\myqc-$L$ for HSP over $\mathbb{Z}$: information on generator is retained}}
\rev{Apart from shift invariance, we further demand that information on the generator is retained in the statistics of the final state in the computational basis. }

\rev{A (too) simple case of the HP-circuits illustrates the need for the second condition.} For HSP over $\mathbb Z_{2^n}$, \myqc-0, which reduces to a parallel Hadamard layer,
gives an analytically solvable procedure for the hidden subgroup
generator $2^p$. For \myqc-0, we can determine $p$ by identifying which output bits are fixed at 0. \rev{(The detailed algorithm is given in Appendix~\ref{app:transversal_hadamard_sampling}). However, the protocol for  $\myqc$-0 fails for general cyclic groups $\mathbb{Z}_q$ because the measurement distributions corresponding to different subgroups $V$ are not distinguishable. That absence of distinguishability illustrates that shift invariance, while necessary for hidden subgroup recovery, is not sufficient. An additional distinguishability condition is required.}

\begin{figure}[t]
    \includegraphics[width=0.45  \textwidth]{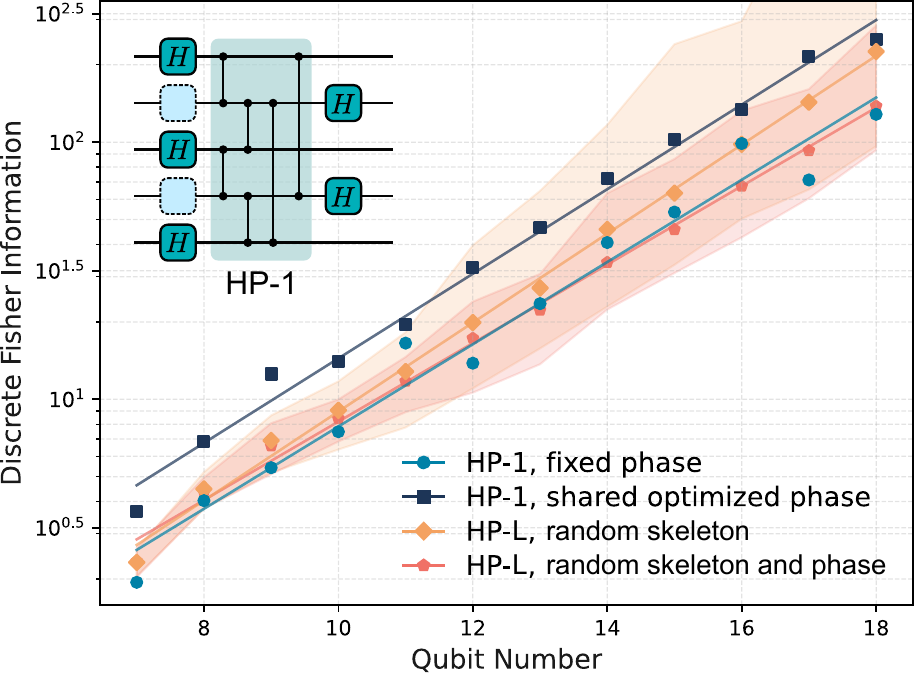}
    \caption{\label{fig:fi_scaling} 
    \textbf{\rev{Average discrete Fisher information (DFI) increasing with the number of qubits.}}\rev{Results are shown for several HP-family circuit settings. $L$ denotes the number of phase blocks layers with HP-1 having a single phase block layer.  Solid curves show the mean, with shaded bands indicating one standard deviation associated with random skeleton or phase choices.}}
\end{figure}

\rev{We use the discrete Fisher information (DFI) as a metric for distinguishability. We use DFI to quantify the impact of the parameter $r$, the subgroup generator, on the measurement outcome probabilities $\text{Pr}(x |r)$. For the \myqc~circuit/QFT denoted as $U$, $\mathrm{Pr}(x|r) = \left|\sum_q \langle x|U|qr\rangle\right|^2$. The DFI is then defined as}
\begin{equation}
\label{eq:DFI}
    \fii(r, n) = \sum_{x = 0}^{2^n -1} \frac{\left(\text{Pr}(x|r+1) - \text{Pr}(x|r)\right)^2}{\text{Pr}(x|r)},
\end{equation}

\rev{The DFI (and the continuous FI) are commonly used to quantify the distinguishability, see e.g. Ref.~\cite{weimar2025fisher}. There is a lower bound on the variance as a function of the number of samples and the DFI~\cite{chapman1951minimum, hammersley1950estimating}:}
 \begin{equation}
     \label{eq:m_bound}
     m \ge \frac{\ln\left( 1 + \epsilon^{-2} \right)}{\ln\left( 1 + \fii(r) \right)},
 \end{equation}
where $m$ is the required number of samples and \(\epsilon\) denotes the maximum tolerable standard deviation of the estimator of the parameter. \rev{In general, it is not guaranteed that the information-theoretic sample-complexity bound in Eq.~\ref{eq:m_bound} can be attained by a computationally efficient estimator.} 

\rev{The performance of our approach is similar to that of maximum likelihood in the tested regime. The maximum-likelihood approach to estimating the parameter is termed {\em sample}-efficient in that it saturates the bound on the variance for the continuous case~\cite{Paris2009, Kwon2026} and minimizes the worst-case error in the discrete case~\cite{choirat2012estimation}. We find numerically that the HP-1 circuit with DeepSets post-processing (which will be explained later) has a similar performance to the maximum likelihood in determining $r$ (appendix G). This is even though our post-processing approach is by design computationally efficient. Thus we believe that the bound of Eq.~\ref{eq:m_bound} is highly relevant to the actual scaling of the HP-1+DeepSets approach.}

For the QFT (Appendix~\ref{app:qft_fisher_information}), we can calculate the DFI as
\begin{equation}
\label{eq:qft dfi}
    \fii_{\mathrm{QFT}}(r,n) \approx \frac{4\pi^2}{9}\left(\frac{2^{2n}}{r^2}-1\right) \sim \mathcal{O}\left(2^{2n}\right).
\end{equation} 
\rev{The factor $2^{2n}/r^2$ in Eq.~\eqref{eq:qft dfi} is one way of understanding why the HSP with the QFT requires $1 \ll r \ll 2^n$, a restriction we shall also impose here.}  





\rev{For fixed-phase \myqc-1, we use the minimized DFI
\begin{equation}
    \fii_{\min}(n)
    :=
    \min_{2\le r<2^{n/\kappa}}\fii(r,n)
\label{eq:hp1_min_dfi_definition_maintext}
\end{equation}
to indicate the worst-case sensitivity of hidden subgroup generators. We find, in the instances investigated, that the $\fii_{\min}$ obeys a similar exponential growth:}

\rev{\begin{conjecture}
\label{thm:hp1_first_order_fisher_transfer_maintext}
For fixed-phase \myqc-1 and \(\kappa=4\), corresponding to the
restricted period \(2\le r<2^{n/4}\ll2^n\), the minimum DFI follows
\begin{equation}
    \fii_{\min}(n)
    \gtrsim
    \exp(kn+b),
    \label{eq:hp1_restricted_window_ln_linear_bound_maintext}
\end{equation}
for constants \(k>0\) and \(b\).
\end{conjecture}}


\rev{Conjecture~\ref{thm:hp1_first_order_fisher_transfer_maintext}
has significant analytical and numerical support.  The conjecture can be shown analytically to follow from the scaling of a particular set of dominant terms in the DFI sum (Appendix E). The scaling expression for those terms has been verified by brute force calculation over many small systems and rare-event simulations, including selected cases up to \(n=200\). Details of the derivation and supporting numerical results are provided in Appendix~\ref{app:qft_fisher_information}. The apparent exponential growth of \(\fii_{\min}\) indicates increasing distinguishability and suggests favorable sample-complexity scaling for \myqc~circuits.}

\rev{For a broader range of \myqc{} circuits, beyond \myqc{}-1,  including different phase choices and numbers of phase-block layers, the DFI is found numerically to grow exponentially.} 
Fig.~\ref{fig:fi_scaling} illustrates the minimum DFI against the qubit number $n$. Fig.~\ref{fig:fi_scaling} also compares the scaling of different \myqc~circuit configurations, where solid curves and shaded bands represent the mean and standard deviation of $\fii_{\min}(n)$. The fixed-phase \myqc-1 circuit utilizes the skeleton, $\Lambda_0 = \{1,3,5, \cdots\}$ and $\Lambda_1 = \{2,4,6, \cdots\}$, with controlled-phase gates with phase $\frac{\pi}{2^{|i-j|}}$ between qubits $i$ and $j$. The optimized \myqc-1 circuit adopts the same skeleton but treats the gate phases as variational parameters, trained to maximize the $\fii_{\min}$. The deeper \myqc{}-$L$ circuits are constructed by generating valid random skeletons $\Lambda$. \rev{The results show exponential growth within the regimes tested.}


\begin{figure}[!t]
    \centering
    \makebox[\columnwidth][l]{\textbf{(a)}}\\[-0.2em]
    \includegraphics[width=\columnwidth]{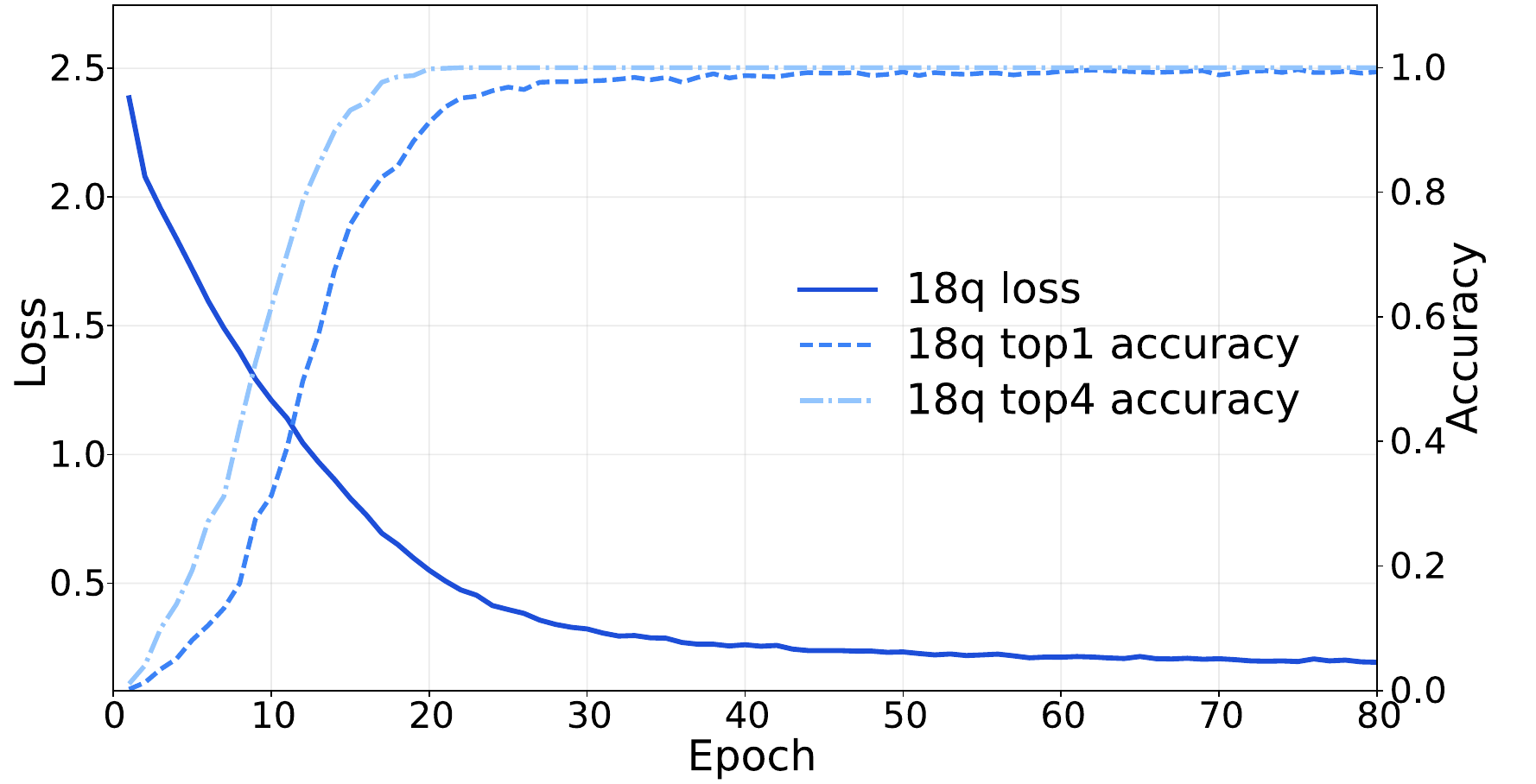}
    \vspace{0.5em}
    \makebox[\columnwidth][l]{\textbf{(b)}}\\[-0.2em]
    \includegraphics[width=\columnwidth]{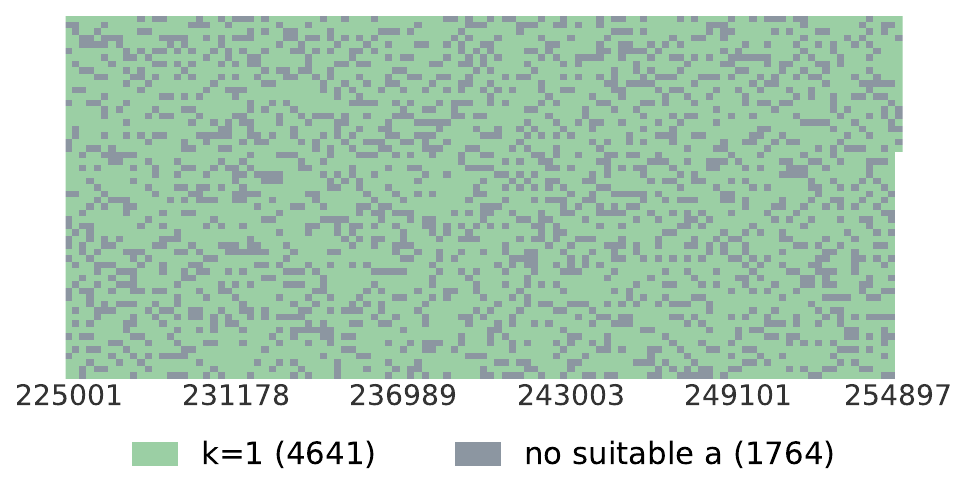}
    \caption{\label{fig:recovery_pipeline}
    \textbf{\rev{Numerical simulation shows successful Shor factoring via \myqc-1.}}
    \rev{(a) Validation loss, top-1 accuracy, and top-4 accuracy for neural-network period recovery on an 18-qubit \myqc{1} circuit.
    (b) Shor factorization results for semiprime numbers between 225001 and 254999. Green squares indicate cases that are successfully factored. Grey square indicate cases where no valid $a$ yields a detectable period.}
}
\end{figure}

\section{Efficient neural network for classical post-processing}To avoid the exponential costs associated with processing full probability distributions, \rev{we employ a neural network with $\text{poly}(n)$ parameters} to predict the hidden subgroup generator directly from the raw measurement outcomes $\{b_i\}$. The input data are generated using the optimized \myqc-1 circuit evaluated in Fig.~\ref{fig:fi_scaling}. 

\rev{We employ a Deep Sets neural network suitable for permutation invariant inputs, as the order of the samples should not influence the outcome~\cite{zaheer2017deep, edwards2016towards}. The outcomes are fed into a shared feature map \(\phi\), which maps each bitstring \(b_i\) to a feature vector. }
The feature vectors are then summed to form an aggregated, permutation-invariant representation $\sum_i\phi(b_i)$. \rev{Subsequently, a beam-search decoder \cite{sutskever2014sequence, wiseman2016sequence} 
returns a ranked list of candidate generators.
We use a polynomial number of samples to train the neural net.} This architecture avoids exponential sample-space computations. 
\rev{Further architectural and training details are provided in Appendix~\ref{sec:AppendixDecoder}.}

\section{Replacing \texorpdfstring{$\mathcal{O}(n^2)$}{O(n2)} QFT with \texorpdfstring{$\mathcal{O}(n)$}{O(n)} \myqc{}-1 in Shor's Algorithm} We evaluate the optimized \myqc{}-1 circuit in classical numerical simulations of Shor's algorithm for factoring with neural net classical post-processing. \rev{Whilst the \myqc{}-1 circuit and neural-network decoder are developed as a general framework for hidden-generator recovery in the HSP, we focus on Shor's algorithm as a representative application.} Shor's algorithm seeks the generator $r$ satisfying $a^r \equiv 1 \pmod{N}$. 
We select the base $a$ such that $1 \le r \le 2^{n/2}$, ensuring resolvability by the $n$-qubit register.
\rev{We consider factoring 18-qubit semiprime instances with \(N=225001,\ldots,254999\).}

The neural net trained successfully. \rev{As shown in Fig.~\ref{fig:recovery_pipeline}(a), the neural network achieves rapid convergence when trained on an 18-qubit instance. Both the top-1 (correct period ranked first) and top-4 accuracies increase swiftly within a few epochs.}

The combined HP-1 and neural net protocol recovers the subgroup generator in the tested instances. As illustrated in Fig.~\ref{fig:recovery_pipeline}(b), each composite integer is colored by the minimum number of guesses $k$. \rev{All solvable instances are solved on the first attempt ($k=1$). We further add noise during the testing to generically simulate experimental scenarios (see Appendix~\ref{app:noise_robustness_details} for numbers).}

\section{Summary and outlook}We propose a hardware-efficient framework for solving the HSP by shallow, shift-invariant \myqc{}-1 circuits with a polynomial-scaling neural network. To verify the scalability of this protocol, we analyzed the growth model of the DFI. Finally, we evaluated our protocol on Shor's algorithm. The numerical simulations demonstrate that our approach successfully finds the hidden subgroup with high accuracy and remains robust against moderate depolarizing noise.


Apart from the $\mathcal{O}(n)$ scaling being hardware friendly, the reduction to grouped phase gates in \myqc~circuits makes the phase operations easier to implement than individual phase gates, \rev{since} consecutive phase operations can be merged into single continuous pulses across multiple experimental platforms, including in linear optics \cite{carolan2015universal, Kok2007}, neutral-atom systems \cite{saffman2010quantum, evered2023high, mohan2025parametrized}, trapped ions \cite{Leibfried2003, Tan2013}, and NMR \cite{vandersypen2004nmr}.


\section{Acknowledgements}We acknowledge support from The Guangdong Provincial Quantum Science Strategic Initiative (Grant No. GDZX2503001) and the City University of Hong Kong (Proj. 9610623).
\section{Code Availability}All code relevant to this work is publicly available in the GitHub repository: \url{https://github.com/Fragecity/altqft}.

\bibliography{ref.bib}

@article{liu2024information,
  title={Information compression via hidden subgroup quantum autoencoders},
  author={Liu, Feiyang and Bian, Kaiming and Meng, Fei and Zhang, Wen and Dahlsten, Oscar},
  journal={npj Quantum Information},
  volume={10},
  number={1},
  pages={74},
  year={2024},
  publisher={Nature Publishing Group UK London}
}

@misc{Kitaev1995,
  author = {A. Yu. Kitaev},
  title = {Quantum Measurements and the Abelian Stabilizer Problem},
  year = {1995},
  eprint = {quant-ph/9511026},
  archivePrefix = {arXiv},
  primaryClass = {quant-ph}
}

@article{EttingerHoyerKnill2004,
  author = {Ettinger, Mark and H{\o}yer, Peter and Knill, Emanuel},
  title = {The Quantum Query Complexity of the Hidden Subgroup Problem Is Polynomial},
  journal = {Information Processing Letters},
  volume = {91},
  number = {1},
  pages = {43--48},
  year = {2004},
  doi = {10.1016/j.ipl.2004.03.008}
}

@misc{Lomont2004,
  author = {Chris Lomont},
  title = {The Hidden Subgroup Problem: Review and Open Problems},
  year = {2004},
  eprint = {quant-ph/0411037},
  archivePrefix = {arXiv},
  primaryClass = {quant-ph}
}

@article{Jozsa2001,
  author = {Richard Jozsa},
  title = {Quantum Factoring, Discrete Logarithms, and the Hidden Subgroup Problem},
  journal = {Computing in Science \& Engineering},
  volume = {3},
  number = {2},
  pages = {34--43},
  year = {2001},
  doi = {10.1109/5992.909005}
}

@ARTICLE{Kwon2026,
       author = {{Kwon}, Hyukgun and {Lie}, Seok Hyung and {Jiang}, Liang},
        title = "{Universal Sample Complexity Bounds in Quantum Learning Theory via Fisher Information Matrix}",
      journal = {arXiv e-prints},
         year = 2026,
        month = feb,
          eid = {arXiv:2602.21510},
        pages = {arXiv:2602.21510},
          doi = {10.48550/arXiv.2602.21510},
archivePrefix = {arXiv},
       eprint = {2602.21510},
 primaryClass = {quant-ph},
       adsurl = {https://ui.adsabs.harvard.edu/abs/2026arXiv260221510K}
}

@ARTICLE{Paris2009,
       author = {{Paris}, Matteo G.~A.},
        title = "{Quantum estimation for quantum technology}",
      journal = {arXiv e-prints},
         year = 2008,
        month = apr,
          eid = {arXiv:0804.2981},
        pages = {arXiv:0804.2981},
          doi = {10.48550/arXiv.0804.2981},
archivePrefix = {arXiv},
       eprint = {0804.2981},
 primaryClass = {quant-ph},
       adsurl = {https://ui.adsabs.harvard.edu/abs/2008arXiv0804.2981P}
}

@INPROCEEDINGS{CleveWatrous2000,
  author={Cleve, R. and Watrous, J.},
  booktitle={Proceedings 41st Annual Symposium on Foundations of Computer Science}, 
  title={Fast parallel circuits for the quantum Fourier transform}, 
  year={2000},
  volume={},
  number={},
  pages={526-536},
  doi={10.1109/SFCS.2000.892140}}

@INPROCEEDINGS{HalesHallgren2000,
  author={Hales, L. and Hallgren, S.},
  booktitle={Proceedings 41st Annual Symposium on Foundations of Computer Science}, 
  title={An improved quantum Fourier transform algorithm and applications}, 
  year={2000},
  volume={},
  number={},
  pages={515-525},
  doi={10.1109/SFCS.2000.892139}}

@book{NielsenChuang2000,
  author = {Michael A. Nielsen and Isaac L. Chuang},
  title = {Quantum Computation and Quantum Information},
  publisher = {Cambridge University Press},
  year = {2000}
}

@article{Zuchongzhi,
  title = {Establishing a New Benchmark in Quantum Computational Advantage with 105-qubit {Zuchongzhi} 3.0 Processor},
  author = {Gao, Dongxin and Fan, Daojin and Zha, Chen and others},
  journal = {Phys. Rev. Lett.},
  volume = {134},
  issue = {9},
  pages = {090601},
  numpages = {7},
  year = {2025},
  month = {Mar},
  publisher = {American Physical Society},
  doi = {10.1103/PhysRevLett.134.090601},
  url = {https://link.aps.org/doi/10.1103/PhysRevLett.134.090601}
}

@article{Simon1997,
  author = {Daniel R. Simon},
  title = {On the Power of Quantum Computation},
  journal = {SIAM Journal on Computing},
  volume = {26},
  number = {5},
  pages = {1474--1483},
  year = {1997},
  doi = {10.1137/S0097539796298637}
}

@inproceedings{Shor1994,
  author = {Peter W. Shor},
  title = {Algorithms for Quantum Computation: Discrete Logarithms and Factoring},
  booktitle = {Proceedings 35th Annual Symposium on Foundations of Computer Science},
  pages = {124--134},
  publisher = {IEEE},
  year = {1994},
  doi = {10.1109/SFCS.1994.365700}
}

@article{Shor1997,
   author = {Shor, Peter W.},
   title = {Polynomial-Time Algorithms for Prime Factorization and Discrete Logarithms on a Quantum Computer},
   journal = {SIAM Journal on Computing},
   volume = {26},
   number = {5},
   pages = {1484-1509},
   DOI = {10.1137/S0097539795293172},
   url = {https://doi.org/10.1137/S0097539795293172 ,eprint = https://doi.org/10.1137/S0097539795293172 , abstract = A digital computer is generally believed to be an efficient universal computing device; that is, it is believed able to simulate any physical computing device with an increase in computation time by at most a polynomial factor. This may not be true when quantum mechanics is taken into consideration. This paper considers factoring integers and finding discrete logarithms, two problems which are generally thought to be hard on a classical computer and which have been used as the basis of several proposed cryptosystems. Efficient randomized algorithms are given for these two problems on a hypothetical quantum computer. These algorithms take a number of steps polynomial in the input size, e.g., the number of digits of the integer to be factored.},
   year = {1997},
   type = {Journal Article}
}

@article{kim2023evidence,
  title={Evidence for the utility of quantum computing before fault tolerance},
  author={Kim, Youngseok and Eddins, Andrew and Anand, Sajant and Wei, Ken Xuan and Van Den Berg, Ewout and Rosenblatt, Sami and Nayfeh, Hasan and Wu, Yantao and Zaletel, Michael and Temme, Kristan and others},
  journal={Nature},
  volume={618},
  number={7965},
  pages={500--505},
  year={2023},
  publisher={Nature Publishing Group UK London}
}

@article{google2025quantum,
  author  = {{Google Quantum AI and Collaborators}},
  title   = {Quantum error correction below the surface code threshold},
  journal = {Nature},
  volume  = {638},
  number  = {8052},
  pages   = {920--926},
  year    = {2025},
  doi     = {10.1038/s41586-024-08449-y}
}

@inproceedings{tang2021cutqc,
  title={{CutQC}: using small quantum computers for large quantum circuit evaluations},
  author={Tang, Wei and others},
  booktitle={Proceedings of the 26th ACM International Conference on Architectural Support for Programming Languages and Operating Systems},
  pages={473--486},
  year={2021}
}

@article{Peng2020,
  title = {Simulating Large Quantum Circuits on a Small Quantum Computer},
  author = {Peng, Tianyi and Harrow, Aram W. and Ozols, Maris and Wu, Xiaodi},
  journal = {Phys. Rev. Lett.},
  volume = {125},
  issue = {15},
  pages = {150504},
  numpages = {6},
  year = {2020},
  month = {Oct},
  publisher = {American Physical Society},
  doi = {10.1103/PhysRevLett.125.150504},
  url = {https://link.aps.org/doi/10.1103/PhysRevLett.125.150504}
}

@misc{Coppersmith2002,
  author = {{Coppersmith}, D.},
  title = {An approximate Fourier transform useful in quantum factoring},
  year = {2002},
  eprint = {quant-ph/0201067},
  archivePrefix = {arXiv},
  primaryClass = {quant-ph}
}

@article{park2023reducing,
  title={Reducing CNOT count in quantum Fourier transform for the linear nearest-neighbor architecture},
  author={Park, Byeongyong and Ahn, Doyeol},
  journal={Scientific Reports},
  volume={13},
  number={1},
  pages={8638},
  year={2023},
  publisher={Nature Publishing Group UK London}
}

@article{baumer2024quantum,
  title={Quantum Fourier transform using dynamic circuits},
  author={B{\"a}umer, Elisa and others},
  journal={arXiv preprint arXiv:2403.09514},
  year={2024}
}

@article{aumann2026demonstrating,
  title={Demonstrating Record Fidelity for the Quantum Fourier Transform},
  author={Aumann, Philipp and Fellner, Michael and Alber, David and Cykiert, Max and Fleckenstein, Christoph and ter Hoeven, Roeland and Stenzel, Leo and Valencia-Tortora, Riccardo J and Lechner, Wolfgang},
  journal={arXiv preprint arXiv:2604.12465},
  year={2026}
}

@article{saffman2010quantum,
  title={Quantum information with {Rydberg} atoms},
  author={Saffman, Mark and Walker, Thad G and M{\o}lmer, Klaus},
  journal={Reviews of modern physics},
  volume={82},
  number={3},
  pages={2313--2363},
  year={2010},
  publisher={APS}
}

@article{Kok2007,
  title = {Linear optical quantum computing with photonic qubits},
  author = {Kok, Pieter and Munro, W. J. and Nemoto, Kae and Ralph, T. C. and Dowling, Jonathan P. and Milburn, G. J.},
  journal = {Rev. Mod. Phys.},
  volume = {79},
  issue = {1},
  pages = {135--174},
  numpages = {0},
  year = {2007},
  month = {Jan},
  publisher = {American Physical Society},
  doi = {10.1103/RevModPhys.79.135},
  url = {https://link.aps.org/doi/10.1103/RevModPhys.79.135}
}

@article{evered2023high,
  title={High-fidelity parallel entangling gates on a neutral-atom quantum computer},
  author={Evered, Simon J and Bluvstein, Dolev and Kalinowski, Marcin and Ebadi, Sepehr and Manovitz, Tom and Zhou, Hengyun and Li, Sophie H and Geim, Alexandra A and Wang, Tout T and Maskara, Nishad and others},
  journal={Nature},
  volume={622},
  number={7982},
  pages={268--272},
  year={2023},
  publisher={Nature Publishing Group UK London}
}

@article{mohan2025parametrized,
  title={Parametrized multiqubit gates for neutral-atom quantum platforms},
  author={Mohan, Madhav and De Hond, Julius and Kokkelmans, Servaas},
  journal={Physical Review Applied},
  volume={23},
  number={5},
  pages={054074},
  year={2025},
  publisher={APS}
}

@article{carolan2015universal,
  title={Universal linear optics},
  author={Carolan, Jacques and Harrold, Christopher and Sparrow, Chris and Mart{\'\i}n-L{\'o}pez, Enrique and Russell, Nicholas J and Silverstone, Joshua W and Shadbolt, Peter J and Matsuda, Nobuyuki and Oguma, Manabu and Itoh, Mikitaka and others},
  journal={Science},
  volume={349},
  number={6249},
  pages={711--716},
  year={2015},
  publisher={American Association for the Advancement of Science}
}

@article{Leibfried2003,
  author  = {Leibfried, D. and DeMarco, B. and Meyer, V. and Lucas, D. and Barrett, M. and Britton, J. and Itano, W. M. and Jelenkovi{\'c}, B. and Langer, C. and Rosenband, T. and Wineland, D. J.},
  title   = {Experimental demonstration of a robust, high-fidelity geometric two ion-qubit phase gate},
  journal = {Nature},
  volume  = {422},
  number  = {6930},
  pages   = {412--415},
  year    = {2003},
  doi     = {10.1038/nature01492},
  url     = {https://doi.org/10.1038/nature01492}
}

@article{Tan2013,
  title = {Demonstration of a Dressed-State Phase Gate for Trapped Ions},
  author = {Tan, T. R. and Gaebler, J. P. and Bowler, R. and Lin, Y. and Jost, J. D. and Leibfried, D. and Wineland, D. J.},
  journal = {Phys. Rev. Lett.},
  volume = {110},
  issue = {26},
  pages = {263002},
  numpages = {5},
  year = {2013},
  month = {Jun},
  publisher = {American Physical Society},
  doi = {10.1103/PhysRevLett.110.263002},
  url = {https://link.aps.org/doi/10.1103/PhysRevLett.110.263002}
}

@article{vandersypen2004nmr,
  title={{NMR} techniques for quantum control and computation},
  author={Vandersypen, Lieven MK and Chuang, Isaac L},
  journal={Reviews of Modern Physics},
  volume={76},
  number={4},
  pages={1037--1069},
  year={2004},
  publisher={APS}
}

@article{chapman1951minimum,
  title={Minimum variance estimation without regularity assumptions},
  author={Chapman, Douglas G and Robbins, Herbert},
  journal={The Annals of Mathematical Statistics},
  pages={581--586},
  year={1951},
  publisher={JSTOR}
}

@article{hammersley1950estimating,
  title={On estimating restricted parameters},
  author={Hammersley, John M},
  journal={Journal of the Royal Statistical Society. Series B (Methodological)},
  volume={12},
  number={2},
  pages={192--240},
  year={1950},
  publisher={JSTOR}
}

@article{zaheer2017deep,
  title={Deep sets},
  author={Zaheer, Manzil and Kottur, Satwik and Ravanbakhsh, Siamak and Poczos, Barnabas and Salakhutdinov, Russ R and Smola, Alexander J},
  journal={Advances in neural information processing systems},
  volume={30},
  year={2017}
}

@article{edwards2016towards,
  title={Towards a neural statistician},
  author={Edwards, Harrison and Storkey, Amos},
  journal={arXiv preprint arXiv:1606.02185},
  year={2016}
}

@inproceedings{sutskever2014sequence,
  title={Sequence to sequence learning with neural networks},
  author={Sutskever, Ilya and Vinyals, Oriol and Le, Quoc V},
  booktitle={Advances in neural information processing systems},
  volume={27},
  year={2014}
}

@inproceedings{wiseman2016sequence,
  title={Sequence-to-Sequence Learning as Beam-Search Optimization},
  author={Wiseman, Sam and Rush, Alexander M},
  booktitle={Proceedings of the 2016 Conference on Empirical Methods in Natural Language Processing},
  pages={1296--1306},
  year={2016}
}

@article{takagi2022fundamental,
  title={Fundamental limits of quantum error mitigation},
  author={Takagi, Ryuji and Endo, Suguru and Minagawa, Shintaro and Gu, Mile},
  journal={npj Quantum Information},
  volume={8},
  number={1},
  pages={114},
  year={2022},
  publisher={Nature Publishing Group UK London}
}

@article{takagi2023universal,
  title={Universal sampling lower bounds for quantum error mitigation},
  author={Takagi, Ryuji and Tajima, Hiroyasu and Gu, Mile},
  journal={Physical Review Letters},
  volume={131},
  number={21},
  pages={210602},
  year={2023},
  publisher={APS}
}

@article{weimar2025fisher,
  title={Fisher information flow in artificial neural networks},
  author={Weimar, Maximilian and Rachbauer, Lukas M and Starshynov, Ilya and Faccio, Daniele and Adilova, Linara and Bouchet, Dorian and Rotter, Stefan},
  journal={Physical Review X},
  volume={15},
  number={3},
  pages={031072},
  year={2025},
  publisher={APS}
}

@article{vandenBerg2023,
  author  = {van den Berg, Ewout and Minev, Zlatko K. and Kandala, Abhinav and Temme, Kristan},
  title   = {Probabilistic error cancellation with sparse Pauli--Lindblad models on noisy quantum processors},
  journal = {Nature Physics},
  volume  = {19},
  pages   = {1116--1121},
  year    = {2023},
  doi     = {10.1038/s41567-023-02042-2}
}

@article{GoogleQAI2023,
  author  = {{Google Quantum AI}},
  title   = {Suppressing quantum errors by scaling a surface code logical qubit},
  journal = {Nature},
  volume  = {614},
  pages   = {676--681},
  year    = {2023},
  doi     = {10.1038/s41586-022-05434-1}
}

@article{Bausch2024,
  author  = {Bausch, Johannes and Senior, Andrew W. and Heras, Francisco J. H. and others},
  title   = {Learning high-accuracy error decoding for quantum processors},
  journal = {Nature},
  volume  = {635},
  pages   = {834--840},
  year    = {2024},
  doi     = {10.1038/s41586-024-08148-8}
}

@article{Will2025,
  author  = {Will, M. and Cochran, T. A. and Rosenberg, E. and others},
  title   = {Probing non-equilibrium topological order on a quantum processor},
  journal = {Nature},
  volume  = {645},
  pages   = {348--353},
  year    = {2025},
  doi     = {10.1038/s41586-025-09456-3}
}

@article{Evered2023,
  author  = {Evered, Simon J. and Bluvstein, Dolev and Kalinowski, Marcin
             and Ebadi, Sepehr and Manovitz, Tom and Zhou, Hengyun
             and Li, Sophie H. and Geim, Alexandra A. and Wang, Tout T.
             and Maskara, Nishad and Levine, Harry and Semeghini, Giulia
             and Greiner, Markus and Vuleti{\'c}, Vladan and Lukin, Mikhail D.},
  title   = {High-fidelity parallel entangling gates on a neutral-atom quantum computer},
  journal = {Nature},
  volume  = {622},
  pages   = {268--272},
  year    = {2023},
  doi     = {10.1038/s41586-023-06481-y}
}

@article{GoogleQAI2025,
  author  = {{Google Quantum AI and Collaborators}},
  title   = {Quantum error correction below the surface code threshold},
  journal = {Nature},
  volume  = {638},
  pages   = {920--926},
  year    = {2025},
  doi     = {10.1038/s41586-024-08449-y}
}

@article{choirat2012estimation,
  title={Estimation in discrete parameter models},
  author={Choirat, Christine and Seri, Raffaello},
  journal={Statistical Science},
  volume={27},
  number={2},
  pages={278--293},
  year={2012},
  month={May},
  publisher={Institute of Mathematical Statistics},
  doi={10.1214/11-STS371},
  url={https://doi.org/10.1214/11-STS371}
}
\newpage
\onecolumngrid
\allowdisplaybreaks
\let\section\originalsection
\appendix
\setcounter{secnumdepth}{1}

\newtheorem{lemma}{Lemma}
\newenvironment{theoremrestatement}[2]{%
    \begin{trivlist}
    \item[\hskip\labelsep\bfseries Theorem~\ref{#1} (#2).]\itshape
}{%
    \end{trivlist}
}

\section*{Appendix}
\makeatletter
\def\@sectioncntformat#1{\csname the#1\endcsname.\quad}
\def\@hangfrom@section#1#2#3{#1#2#3}
\makeatother

\section{Hidden Subgroup Problem and Fourier Sampling}
\label{append: hsp and fs}

For completeness, this section reviews the standard quantum algorithm for solving the Hidden Subgroup Problem (HSP) over finite abelian groups \cite{Kitaev1995,Jozsa2001,Lomont2004}. Formally, the HSP is defined as follows:

\begin{definition}[Hidden Subgroup Problem]
    Given a function $f: G \to X$ defined on a group $G$, with the promise that there exists a hidden subgroup $V \leq G$ such that for all $g_1, g_2 \in G$
    \begin{equation}
        f(g_1) = f(g_2) \Longleftrightarrow g_1 - g_2 \in V
    \end{equation}
    The objective of this task is to identify the hidden subgroup $V$.
\end{definition}

In the context of this paper, we restrict our attention to finite abelian groups. For such groups, the fundamental theorem of finite abelian groups allows us to express the group structure as a direct product of cyclic groups:
\begin{equation}
    G \cong \mathbb{Z}_{p_1} \times \mathbb{Z}_{p_2} \times \cdots \times \mathbb{Z}_{p_k}.
\end{equation}

The standard quantum procedure for solving the hidden subgroup problem begins by preparing a uniform superposition over all elements of the group $G$ \cite{Kitaev1995,NielsenChuang2000}. We initialize the system in the state
\begin{equation}
    \ket{\psi} = \frac{1}{\sqrt{|G|}} \sum_{g \in G} \ket{g} \ket{0},
\end{equation}
where $|G|$ is the number of elements of the group $G$. The first register holds the group elements, and the second register is initialized to a standard basis state. Next, we evaluate the quantum oracle corresponding to the function $f$, which acts on the two registers and entangles the group elements with their corresponding function values
\begin{equation}
    \ket{\psi} = \frac{1}{\sqrt{|G|}} \sum_{g \in G} \ket{g} \ket{f(g)}.
\end{equation}


Grouping the superposition terms according to cosets is illuminating, as the function $f$ is constant within each coset of the hidden subgroup $V$ and distinct across different cosets.

Before proceeding, it is useful to formalize the state notation for a subset. For any subset $S \subseteq G$, we define its corresponding normalized quantum state $\ket{S}$ as the uniform superposition over all elements within the set:
\begin{equation}
    \ket{S} = \frac{1}{\sqrt{|S|}} \sum_{s \in S} \ket{s}.
\end{equation}
Using this definition, the normalized state for a specific coset $c+V$ is directly given by
\begin{equation}
    \ket{c+V} = \frac{1}{\sqrt{|V|}} \sum_{v \in V} \ket{c+v}.
\end{equation}

By applying this coset state definition, the post-oracle state can be elegantly rewritten as a superposition of these orthogonal coset states, each perfectly entangled with its unique function evaluation:
\begin{equation}
    \ket{\psi} = \sqrt{\frac{|V|}{|G|}} \sum_{c\in C} \ket{c+V} \ket{f(c)},
\end{equation}
where $C$ is the set of all possible coset representatives. For simplicity, we assume that the second register is measured immediately after the oracle query. Measuring the second register yields a specific outcome $f(c)$ with a uniform probability of $|V|/|G|$. Consequently, the first register collapses into the corresponding pure coset state:
\begin{equation}
    \ket{\psi_{\text{post}}} = \ket{c+V}.
\end{equation}
From this point forward, our analysis will focus exclusively on the quantum state of this first register.

Whether the second register is physically measured to collapse the first register into a pure coset state $\ket{c+V}$, or traced out to leave it in a mixed state, the resulting Fourier-basis statistics remain identical \cite{NielsenChuang2000}. The quantum Fourier transform produces a measurement distribution that is independent of the coset representative. For simplicity, we assume the second register is measured immediately after the oracle query, allowing our subsequent analysis to focus exclusively on the pure coset state:
\begin{equation}
    \ket{\psi_{\text{post}}} = \ket{c+V}.
\end{equation}

A direct measurement in the computational basis is uninformative. Indeed, using the definition of the coset state,
\begin{equation}
    \langle g|c+V\rangle
    = \frac{1}{\sqrt{|V|}} \sum_{v \in V} \langle g|c+v\rangle
    = \frac{1}{\sqrt{|V|}} \sum_{v \in V} \delta_{g,c+v}.
\end{equation}
For fixed $c$, the map $v \mapsto c+v$ is a bijection from $V$ to the coset $c+V$. Therefore the sum $\sum_{v \in V} \delta_{g,c+v}$ equals $1$ when $g \in c+V$ and equals $0$ when $g \notin c+V$. Hence
\begin{equation}
    \Pr[g\,|\,c,V] = |\langle g|c+V\rangle|^2 =
    \begin{cases}
        1/|V|,& g\in c+V,\\
        0,& g\notin c+V,
    \end{cases}
\end{equation}
so the observed element is uniform on the unknown coset. Averaging over uniformly random cosets gives the marginal distribution
\begin{equation}
    \Pr[g]=\frac{1}{|G|},
\end{equation}
which contains no direct information about the hidden subgroup $V$.

To extract meaningful information, we must apply the QFT over the group $G$ before measuring \cite{Kitaev1995,Jozsa2001}. Let $\chi_y(g)$ denote the group characters. The QFT maps a computational basis state $\ket{g}$ as follows:
\begin{equation}
    \text{QFT} \ket{g} = \frac{1}{\sqrt{|G|}} \sum_{y \in G} \chi_y(g) \ket{y}.
\end{equation}
Applying this transformation to our pure coset state $\ket{c+V}$ yields:
\begin{align}
    \text{QFT} \ket{c+V} &= \frac{1}{\sqrt{|V|}} \sum_{h \in V} \text{QFT} \ket{c+v} \nonumber \\
    &= \frac{1}{\sqrt{|V||G|}} \sum_{v \in V} \sum_{y \in G} \chi_y(c+v) \ket{y} \nonumber \\
    &= \frac{1}{\sqrt{|V||G|}} \sum_{y \in G} \chi_y(c) \left( \sum_{v \in V} \chi_y(v) \right) \ket{y}.
\end{align}
By the orthogonality of group characters, the sum $\sum_{v \in V} \chi_y(v)$ evaluates to $|V|$ if the character $\chi_y$ is trivial on $V$ (i.e., $y$ belongs to the orthogonal subgroup $V^\perp$), and evaluates to $0$ otherwise. Thus, the state simplifies to a superposition over $V^\perp$:
\begin{equation}
    \text{QFT} \ket{c+V} = \sqrt{\frac{|V|}{|G|}} \sum_{y \in V^\perp} \chi_y(c) \ket{y}.
\end{equation}
Measuring this state in the computational basis gives the probability of observing an outcome $y \in V^\perp$:
\begin{equation}
    \Pr[y] = \left| \sqrt{\frac{|V|}{|G|}} \chi_y(c) \right|^2 = \frac{|V|}{|G|} |\chi_y(c)|^2 = \frac{|V|}{|G|}.
\end{equation}
Crucially, because $|\chi_y(c)|^2 = 1$ for all group characters, the resulting measurement distribution is uniformly distributed over $V^\perp$ and is strictly independent of the random coset representative $c$. This shift invariance is the fundamental property that allows the extraction of subgroup information by effectively removing the unknown shift at the probability level.

The QFT over $G$ is defined using the character group $\hat{G}$ by
\begin{equation}
F_G|g\rangle = \frac{1}{\sqrt{|G|}} \sum_{\chi\in\hat{G}} \chi(g)\,|\chi\rangle,
\end{equation}
where $\{|\chi\rangle:\chi\in\hat{G}\}$ is the Fourier basis and each character $\chi:G\to \mathbb{C}^\times$ is a group homomorphism. Here $\mathbb{C}^\times=\mathbb{C}\setminus\{0\}$ denotes the multiplicative group of nonzero complex numbers, so with additive notation on $G$ one has $\chi(g_1+g_2)=\chi(g_1)\chi(g_2)$. For finite abelian $G$, every character has unit modulus. Applying $F_G$ to a coset state gives
\begin{align}
F_G|c+V\rangle
&= \frac{1}{\sqrt{|V|}} \sum_{v\in V} F_G|c+v\rangle \nonumber\\
&= \frac{1}{\sqrt{|V||G|}} \sum_{\chi\in\hat{G}} \sum_{v\in V} \chi(c+v)\,|\chi\rangle \nonumber\\
&= \frac{1}{\sqrt{|V||G|}} \sum_{\chi\in\hat{G}} \chi(c)\left(\sum_{v\in V}\chi(v)\right)|\chi\rangle.
\end{align}
The character sum over $V$ satisfies
\begin{equation}
\sum_{v\in V}\chi(v)=
\begin{cases}
|V|,& \chi\in V^\perp,\\
0,& \chi\notin V^\perp,
\end{cases}
\qquad
V^\perp=\{\chi\in\hat{G}:\chi(v)=1\ \text{for all }v\in V\}.
\end{equation}
Hence
\begin{equation}
F_G|c+V\rangle
=
\sqrt{\frac{|V|}{|G|}}
\sum_{\chi\in V^\perp} \chi(c)\,|\chi\rangle.
\end{equation}
The Fourier-basis measurement distribution is therefore
\begin{equation}
\text{Pr}(\chi\,\mid\,c,V)=
\begin{cases}
|V|/|G|,& \chi\in V^\perp,\\
0,& \chi\notin V^\perp,
\end{cases}
\end{equation}
which is independent of the coset representative $c$. The same formula holds for the mixed state $\rho_V$, since it is a convex combination of pure coset states with identical Fourier statistics.

Repeating the procedure therefore produces samples from $V^\perp$. In the finite abelian setting, classical post-processing reconstructs $V^\perp$ from the sample span and then recovers the hidden subgroup using the duality identity \cite{Kitaev1995,EttingerHoyerKnill2004,Lomont2004}
\begin{equation}
(V^\perp)^\perp = V.
\end{equation}
Fourier sampling thus converts hidden translational structure in the oracle into explicit linear constraints in the dual group.

\section{Necessary condition of shift invariance}
\label{app:necessary condition of sinv}
Here we make explicit the probability-level shift-invariance condition used in Fourier sampling.
Let $G$ be a finite abelian group, and let $\{\ket{g}:g\in G\}$ denote the computational basis.
For $c\in G$, define the shift operator
\begin{equation}
    T_c\ket{g}=\ket{g+c}.
\end{equation}
For a subgroup $V\le G$, the normalized subgroup state is
\begin{equation}
    \ket{c + V}=\frac{1}{\sqrt{|V|}}\sum_{v\in V}\ket{c + v}.
\end{equation}
For HSP sampling, shift invariance requires that the measurement distribution after applying $U$ is strictly independent of the coset representative:
\begin{equation}
    \left|\bra{x}U\ket{V}\right|^2
    =
    \left|\bra{x}U T_c\ket{V}\right|^2,
    \qquad
    \forall x,c\in G,\ \forall V\le G.
    \label{eq:prob_shift_invariance}
\end{equation}
Operationally, this means $U\ket{V}$ and $U\ket{c+V}$ may differ by a relative phase, but they must yield identical probabilities in the computational basis. Any viable QFT replacement must therefore satisfy this condition.
\begin{restatedproposition}[Necessary condition of shift invariance]
    Suppose $U$ is a unitary on the group register that satisfies Eq.~\eqref{eq:prob_shift_invariance}. Then, for every $x,y\in G$,
    \begin{equation}
        \left|\bra{x}U\ket{y}\right|=\frac{1}{\sqrt{|G|}}.
    \end{equation}
    Equivalently, there exist phases $\theta_{xy}\in\mathbb{R}$ such that
    \begin{equation}
        \bra{x}U\ket{y}=\frac{1}{\sqrt{|G|}}e^{\ii\theta_{xy}}.
    \end{equation}
\end{restatedproposition}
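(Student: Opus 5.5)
We specialise the shift-invariance condition, Eq.~\eqref{eq:prob_shift_invariance}, to the trivial subgroup $V=\{0\}\le G$, and then combine the resulting equal-modulus property with unitarity. For $V=\{0\}$ the normalized subgroup state is $\ket{V}=\ket{0}$ and $T_c\ket{V}=\ket{c}$. Eq.~\eqref{eq:prob_shift_invariance} then reads
\begin{equation}
\left|\bra{x}U\ket{0}\right|^2=\left|\bra{x}U\ket{c}\right|^2,\qquad\forall x,c\in G.
\end{equation}
Hence, for each fixed output $x$, all entries of the $x$-th row of $U$ have the same modulus. Call this common modulus $a_x:=|\bra{x}U\ket{0}|$.

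Next we use that $U$ is unitary, so each row of $U$ has unit norm:
\begin{equation}
\sum_{y\in G}\left|\bra{x}U\ket{y}\right|^2=\bra{x}UU^\dagger\ket{x}=1 .
\end{equation}
Since every term in this sum equals $a_x^2$, we get $|G|\,a_x^2=1$, i.e.\ $a_x=1/\sqrt{|G|}$ for every $x$. Therefore
\begin{equation}
\left|\bra{x}U\ket{y}\right|=\frac{1}{\sqrt{|G|}}\qquad\forall x,y\in G .
\end{equation}
Note that the columns need no separate treatment: column normalization would give the same constant, but the row argument alone already covers every entry.

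Finally, any complex number of modulus $1/\sqrt{|G|}$ can be written as $e^{\ii\theta}/\sqrt{|G|}$ for some real $\theta$. Defining $\theta_{xy}:=\arg\bra{x}U\ket{y}$, which is well defined because the entry is nonzero, gives the stated form $\bra{x}U\ket{y}=e^{\ii\theta_{xy}}/\sqrt{|G|}$.

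There is no real obstacle; the argument is three short steps. The only points needing care are checking that $\{0\}$ is a legitimate subgroup in the quantifier $\forall V\le G$, and invoking row normalization via $UU^\dagger=I$ rather than column normalization. We also note that only the trivial subgroup was needed, so the condition obtained is necessary but not sufficient for shift invariance.
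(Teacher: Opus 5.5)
Your proposal is correct and follows the paper's own proof essentially step for step. Both specialise Eq.~\eqref{eq:prob_shift_invariance} to the trivial subgroup $V=\{0\}$ to obtain equal moduli along each row, use unitary row normalisation to fix the common modulus at $1/\sqrt{|G|}$, and conclude with the polar decomposition of the (nonzero) entries.
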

To determine how this restricts the matrix elements of $U$, we can examine the trivial subgroup $V=\{0\}$. In this case, the subgroup state is simply the basis state $\ket{0}$, meaning the shift-invariance condition directly constrains the columns of $U$. Since the output probabilities for $\ket{0}$ and any shift $\ket{g}$ must be equal, for any fixed outcome $x$, all entries in the $x$-th row of $U$ must share the same modulus. Unitarity then forces this common modulus to be exactly $1/\sqrt{|G|}$. This leads directly to Proposition~\ref{thm:necessary_modulus_condition}.

\begin{proof}[Proof of Proposition~\ref{thm:necessary_modulus_condition}]
    Take the trivial subgroup $V=\{0\}$.
    Then $\ket{V}=\ket{0}$ and $T_c\ket{V}=\ket{c}$ for every $c\in G$.
    Equation~\eqref{eq:prob_shift_invariance} therefore implies
    \begin{equation}
        \left|\bra{x}U\ket{0}\right|^2
        =
        \left|\bra{x}U\ket{c}\right|^2,
        \qquad
        \forall x,c\in G.
    \end{equation}
    Using the matrix element convention $U_{xy}=\bra{x}U\ket{y}$, this equality shows that $|U_{x0}|^2 = |U_{xc}|^2$ for all $c \in G$.
    In other words, for each fixed row $x$, all entries have the same squared modulus.
    Denote this common value by $a_x$.
    Since $U$ is unitary, the sum of the squared moduli across any row must be normalized to 1:
    \begin{equation}
        1=\sum_{y\in G}|U_{xy}|^2=|G|  a_x.
    \end{equation}
    Hence, $a_x=1/|G|$ for every $x\in G$.
    Taking the square root yields the uniform modulus for all matrix elements:
    \begin{equation}
        |U_{xy}|=\frac{1}{\sqrt{|G|}}.
    \end{equation}
    Because this modulus is strictly positive, every entry admits a polar decomposition
    \begin{equation}
        U_{xy}=\frac{1}{\sqrt{|G|}}e^{\ii\theta_{xy}},
        \qquad \theta_{xy}\in\mathbb{R}.
    \end{equation}
    This proves the claim.
\end{proof}

\section{\myqc-$L$ circuits are shift invariant over $\mathbb{Z}_{2^{n}}$}
\label{app:HPk are sinv}
 This appendix aims to rigorously prove that the \myqc-$L$ circuit family satisfies the shift-invariance property. Establishing this theorem is of significant practical importance: it fundamentally guarantees that the output statistics obtained from this shallow circuit structure can be reliably used for the classical post-processing reconstruction of the hidden subgroup.

\begin{restatedtheorem}
The \myqc-$L$ circuit is shift-invariant over $\mathbb{Z}_{2^n}$.
\end{restatedtheorem}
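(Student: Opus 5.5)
The plan is to reduce the statement to two structural facts: every coset state of $\mathbb{Z}_{2^n}$ is a computational-basis product state on some qubits tensored with $\ket{+}$ on the others, and every \myqc-$L$ matrix element is a unimodular phase that is \emph{bilinear} in the input and output bit strings. Shift invariance then follows because fixing the input bits only contributes an overall phase.

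\textbf{Step 1 (cosets as product states).} Write $y=\sum_i y_i 2^{i-1}$ with qubit $i$ carrying bit $y_i$; any fixed relabelling of qubits would do equally well. Every subgroup of $\mathbb{Z}_{2^n}$ is $V_p=\langle 2^p\rangle$ for some $0\le p\le n$, i.e.\ the set of $y$ whose lowest $p$ bits vanish. Decompose $c=c_{\rm low}+2^p c_{\rm high}$ with $0\le c_{\rm low}<2^p$. Since $2^p c_{\rm high}\in V_p$, we have $c+V_p=c_{\rm low}+V_p$. Adding $c_{\rm low}$ to an element of $V_p$ produces no carry, so
\begin{equation}
\ket{c+V_p}=\ket{b_1\cdots b_p}\otimes\ket{+}^{\otimes(n-p)},
\end{equation}
where $b$ is the bit string of $c_{\rm low}$. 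It therefore suffices to show that $|\bra{x}\uqc{L}\ket{c+V_p}|$ does not depend on $b$.

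\textbf{Step 2 (path-sum form of $\uqc{L}$).} By rule (i), each qubit $k$ carries its input bit $y_k$ until its unique Hadamard and its output bit $x_k$ afterwards. Inserting computational-basis resolutions of the identity at every layer, the Hadamard on qubit $k$ contributes the factor $2^{-1/2}(-1)^{x_ky_k}$.

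Each $\mathrm{CP}_{jk}(\theta_{jk})$ in block $l$ is diagonal. By rule (iii) and the definition of $\cpgate{\Lambda_l}{\Lambdacomp{l}}{\bm{\theta}}$, it acts on a qubit $j\in\Lambda_l$ that has already been Hadamarded and a qubit $k\in\Lambdacomp{l}$ that has not. Its phase is therefore $e^{\ii\theta_{jk}x_jy_k}$, and rule (ii) ensures each pair is counted at most once. Consequently
\begin{equation}
\bra{x}\uqc{L}\ket{y}=2^{-n/2}\exp\!\big(\ii\, x^{T}My\big),\qquad M_{jk}=\pi\delta_{jk}+\theta_{jk},
\end{equation}
with $\theta_{jk}=0$ whenever no gate couples $j$ and $k$. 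This is also consistent with Proposition~\ref{thm:necessary_modulus_condition}.

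\textbf{Step 3 (conclusion).} Split $y=(b,z)$ into the low qubits $b$ and the high qubits $z$, and split $M=[M_{\rm low}\;M_{\rm high}]$ by columns accordingly. Then
\begin{equation}
\bra{x}\uqc{L}\ket{c+V_p}=2^{-n/2}2^{-(n-p)/2}\,e^{\ii x^TM_{\rm low}b}\sum_{z\in\{0,1\}^{n-p}}e^{\ii x^TM_{\rm high}z}.
\end{equation}
The only $b$-dependence is the unimodular prefactor $e^{\ii x^TM_{\rm low}b}$. Hence the squared modulus is independent of $c$ for every $x$ and every $V_p$, which is exactly the shift-invariance condition Eq.~\eqref{eq:prob_shift_invariance}.

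\textbf{Main obstacle.} The delicate step is Step 2. One must check carefully, against the layer ordering in Eq.~\eqref{eq:HP}, that every controlled-phase gate really sees an \emph{output} bit on one wire and an \emph{input} bit on the other. A gate acting between two already-Hadamarded qubits, or between two not-yet-Hadamarded ones, would introduce terms of the form $x_jx_k$ or $y_jy_k$. An $x_jx_k$ term is harmless, since it is a phase depending only on $x$. A $y_jy_k$ term with both $j,k$ among the high qubits is not: it would put quadratic terms in $z$ into the sum, making its modulus depend on the sum's structure in a way the argument does not control. The proof must therefore show that such terms are excluded by the construction, and I would verify this by induction on $l$, tracking which qubits are in $\Lambda_{\le l}$ when each block is applied. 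The carry-free reduction in Step 1 is the second point needing care, but it is straightforward once $c_{\rm high}$ has been absorbed into the subgroup.
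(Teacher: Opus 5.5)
Your proof is correct and follows essentially the same route as the paper. Both write the matrix element as $2^{-n/2}e^{\ii x^{T}My}$ with a directed bilinear phase matrix (the paper by propagating the product state layer by layer, you by a path sum), then pull the fixed low bits of the coset out of the $z$-sum as a unimodular prefactor. Your explicit reduction $c+V_p=c_{\rm low}+V_p$ with the no-carry observation fills in a step the paper simply assumes by taking $0\le c<2^s$. One small aside on your obstacle remark: purely high--high $y_jy_k$ terms would not spoil invariance, only mixed low--high ones would, and Step 2 shows none of these terms occur anyway.
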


The core strategy of the proof is as follows: First, we explicitly calculate the sequential effect of each layer of the \myqc-$L$ circuit on the initial basis state, deriving the analytical expression of the final quantum state. Next, we obtain the corresponding probability amplitudes by calculating the matrix elements of this output state in the computational basis. Finally, we compute the squared modulus of these amplitudes over the subgroup $V$ and its coset $c+V$ respectively, using algebraic simplification to factor out a pure phase factor related to the coset representative $c$. Because the modulus of this pure phase factor is strictly $1$, it naturally cancels out when calculating the measurement probabilities, thereby rigorously proving mathematically that the output probability distribution is completely independent of random shifts.

The $\Lambdacomp{l}$ is given by
\begin{equation}
    \Lambdacomp{l} = \{ 1,2, \cdots, n \} \backslash  \bigcup_{i \leq l} \Lambda_i ~.
\end{equation}
The $\{\Lambda_l\}$ is a partition of index set $\{1,2,\cdots n\}$, which means
\begin{equation}
    \bigcup  \Lambda_l =  \{ 1,2, \cdots, n \}, \quad \Lambda_i \cap \Lambda_j = \emptyset.
\end{equation}

\begin{proof}
    The first step is to calculate the effect of the first layer, explicitly keeping the normalization factor $\frac{1}{\sqrt{2}}$ from the Hadamard gates:
    \begin{align}
        &\cpgate{\Lambda_0}{\Lambdacomp{0}}{\bm{\theta}_{\Lambda_0 \Lambdacomp{0}}  } \bigotimes_{i \in \Lambda_0} H_{i}\ket{b_1 b_2\cdots b_n}\notag\\
        =& \cpgate{\Lambda_0}{\Lambdacomp{0}}{\bm{\theta}_{\Lambda_0\Lambdacomp{0}}} \bigotimes_{i \in \Lambda_0}\frac{1}{\sqrt{2}}(\ket{0} + e^{\pi \ii b_i} \ket{1}) \bigotimes_{k \in \bar \Lambda_0} \ket{b_k} \\
        =& \bigotimes_{i \in \Lambda_0}\frac{1}{\sqrt{2}}\left(\ket{0} + \exp\left[\pi \ii b_i + \ii \sum_{j \in \Lambdacomp{0}} \theta_{ij} b_j\right] \ket{1}\right) \bigotimes_{k \in \bar \Lambda_0} \ket{b_k}.
    \end{align}
    Note that since the qubits in $\Lambdacomp{0}$ have not yet been acted upon by Hadamard gates, they remain in the computational basis. 
    
    Since the subsets \(\Lambda_l\) form a partition of the qubits:
    \begin{equation}
        \bigcup_{l=1}^L \Lambda_l = \{1,2,\cdots, n\}, \quad
        \Lambda_i \cap \Lambda_j = \emptyset,
    \end{equation}
    and the subsets \(\bar{\Lambda}_l\) are defined as the remaining qubits that have not yet been acted on by Hadamard gates:
    \begin{equation}
        \Lambdacomp{l} = \{1,2,\cdots, n\}\setminus \left(\bigcup_{k = 1}^l \Lambda_k\right),
    \end{equation}
    the sequential action of all $L$ layers ensures every qubit $i$ is acted on by a Hadamard gate exactly once. Let $l(i)$ denote the specific stage where $i \in \Lambda_{l(i)}$. The final state after the full circuit becomes:
    \begin{equation}
        \uqc{L}(\bm \theta) \ket{b_1 b_2\cdots b_n} = \bigotimes_{i=1}^n \frac{1}{\sqrt{2}} \left(\ket{0} + \exp\left[\pi \ii b_i + \ii \sum_{j \in \bar{\Lambda}_{l(i)}} \theta_{ij} b_j\right] \ket{1}\right).
    \end{equation}
    
    To find the matrix element, we take the inner product with $\bra{b'_1 b'_2 \cdots b'_n}$. For each qubit $i$, the $\ket{1}$ component is selected if and only if $b'_i = 1$, which mathematically corresponds to multiplying the entire phase exponent by $b'_i$. Thus, we obtain:
    \begin{align}
        &\bra{b'_1 b'_2 \cdots b'_n} \uqc{L}(\bm \theta) \ket{b_1 b_2\cdots b_n}\notag\\
        =& \frac{1}{\sqrt{2^n}} \prod_{i=1}^n \exp\left[ \pi \ii b_i b'_i + \ii \sum_{j \in \bar{\Lambda}_{l(i)}} \theta_{ij} b'_i b_j \right]\\
        =& \frac{1}{\sqrt{2^n}} \exp\left[ \ii \pi \sum_{i=1}^n b_i b'_i + \ii \sum_{l=1}^{L-1} \sum_{i \in \Lambda_l} \sum_{j \in \Lambdacomp{l}} \theta_{ij} b'_i b_j \right].
    \end{align}
    
    To simplify the notation for the matrix elements while respecting the sequential asymmetry of the circuit (where each qubit pair interacts only once), we define a directed effective phase matrix $\tilde{\Theta}$ whose elements are given by:
    \begin{equation}
        \tilde{\theta}_{ij} = \begin{cases}
            \theta_{ij}, & \text{if } i \in \Lambda_{l(i)}, j \in \bar{\Lambda}_{l(i)} \text{ (i.e., } l(i) < l(j)) \\
            \pi, & \text{if } i = j \\
            0, & \text{otherwise (i.e., } l(i) \ge l(j)).
        \end{cases}
    \end{equation}
    Using this effective matrix, the exponent can be compactly written as a full summation over all $i, j$:
    \begin{equation}
        \bra{b'_1 b'_2 \cdots b'_n} \uqc{L}(\bm \theta) \ket{b_1 b_2\cdots b_n} = \frac{1}{\sqrt{2^n}}\exp\left[ \ii \sum_{i, j} \tilde{\theta}_{ij} b'_i b_j \right].
    \end{equation}
    
    The shift invariance requires that the probability distribution over the subgroup $V$ is identical to its cosets. First, we compute the squared probability amplitude over $V$:
    \begin{equation}
        \left|\bra{x}\uqc{L}(\bm \theta)\ket{V}\right|^2
        = \frac{1}{2^n |V|}\left|\sum_{y\in V}  \exp\left[ \ii \sum_{i, j} \tilde{\theta}_{ij} x_i y_j \right] \right|^2.
    \end{equation}
    
    Since $V \leq \mathbb{Z}_{2^n}$, $V = \{q 2^s \mid q \in \mathbb{Z} \}$ for certain $s \leq n$. Assuming a standard binary encoding where the $s$ least significant bits correspond to the trailing qubits, $y\in V$ can be denoted as $y = y_1y_2\cdots y_{n-s} 0 \cdots 0$, with $y_i \in \{0,1\}$. Similarly, the coset $c + V = \{q 2^s  + c\mid 0 \le c < 2^s, q \in \mathbb{Z} \}$, and its elements $y\in c+V$ can be denoted by matching the trailing bits to the binary representation of $c$, yielding $y = y_1y_2\cdots y_{n-s} c_{n-s+1} \cdots c_n$, with $y_i, c_k \in \{0,1\}$.
    
    Therefore, we can expand the squared amplitude for the subgroup $V$:
    \begin{align}
        &\left|\bra{x}\uqc{L}(\bm \theta)\ket{V}\right|^2 \notag \\
        =& \frac{1}{2^n |V|}\left|\sum_{y\in V}  \exp\left[ \ii \sum_{i=1}^n \sum_{j=1}^n \tilde{\theta}_{ij} x_i y_j \right] \right|^2 \\
        =& \frac{1}{2^n 2^{n-s}}\left| \sum_{y_1, \dots, y_{n-s} \in \{0,1\}} \exp\left[ \ii \sum_{i=1}^n \sum_{j=1}^{n-s} \tilde{\theta}_{ij} x_i y_j \right] \right|^2.
    \end{align}
    
    Evaluating the same expression for the coset $c+V$, we factor out the terms involving the constant bits $c_k$:
    \begin{align}
        &\left|\bra{x}\uqc{L}(\bm \theta)\ket{c+V}\right|^2 \notag \\
        =& \frac{1}{2^n |c+V|}\left|\sum_{y\in c+V}  \exp\left[ \ii \sum_{i=1}^n \sum_{j=1}^n \tilde{\theta}_{ij} x_i y_j \right] \right|^2 \\
        =& \frac{1}{2^n 2^{n-s}}\left| \exp\left[ \ii \sum_{i=1}^n \sum_{k=n-s+1}^n \tilde{\theta}_{ik} x_i c_k \right] \sum_{y_1, \dots, y_{n-s} \in \{0,1\}} \exp\left[ \ii \sum_{i=1}^n \sum_{j=1}^{n-s} \tilde{\theta}_{ij} x_i y_j \right] \right|^2 \\
        =& \frac{1}{2^n 2^{n-s}}\left| \exp\left[ \ii \sum_{i=1}^n \sum_{k=n-s+1}^n \tilde{\theta}_{ik} x_i c_k \right] \right|^2 \left| \sum_{y_1, \dots, y_{n-s} \in \{0,1\}} \exp\left[ \ii \sum_{i=1}^n \sum_{j=1}^{n-s} \tilde{\theta}_{ij} x_i y_j \right] \right|^2.
    \end{align}
    
    Because the modulus of the extracted pure phase factor strictly equals $1$, it completely cancels out, yielding:
    \begin{align}
        \left|\bra{x}\uqc{L}(\bm \theta)\ket{c+V}\right|^2
        =& \frac{1}{2^n 2^{n-s}}\left| \sum_{y_1, \dots, y_{n-s} \in \{0,1\}} \exp\left[ \ii \sum_{i=1}^n \sum_{j=1}^{n-s} \tilde{\theta}_{ij} x_i y_j \right] \right|^2 \\
        =&\left|\bra{x}\uqc{L}(\bm \theta)\ket{V}\right|^2.
    \end{align}
    This completes the proof.
\end{proof}

\section{Sampling calculation and recovery guarantee for the \(\myqc-0\) case}

\label{app:transversal_hadamard_sampling}
We now give the calculation underlying the \myqc-0 discussion in the main text.
Consider $G\simeq\mathbb{Z}_{2^n}$ represented by $n$ computational-basis qubits. 
Because $\mathbb{Z}_{2^n}$ is a cyclic group of order $2^n$, any of its subgroups must be generated by a divisor of $2^n$. 
Thus, the hidden subgroup is necessarily of the form $V = \langle 2^p \rangle$ for some integer $0 \le p \le n$. 
Since every element $h \in V$ is a multiple of $2^p$, its $n$-bit binary representation naturally ends with $p$ zeros. 
Under this standard convention, an element $h$ takes the form
\begin{equation}
    h=h_1h_2\cdots h_{n-p}0\cdots 0.
\end{equation}
The subgroup generated by $2^p$ is
\begin{equation}
    V=\{q2^p:q=0,\ldots,2^{n-p}-1\},
\end{equation}
so $|V|=2^{n-p}$.
Since adding an element of $V$ can vary the first $n-p$ bits, each coset has a representative of the form
\begin{equation}
    c=0\cdots 0c_{n-p+1}\cdots c_n.
\end{equation}
Thus a normalized coset state can be written as
\begin{equation}
    \ket{c+V}
    =
    \frac{1}{\sqrt{2^{n-p}}}
    \sum_{h_1,\ldots,h_{n-p}\in\{0,1\}}
    \ket{h_1\cdots h_{n-p}c_{n-p+1}\cdots c_n}.
    \label{eq:coset_state_z2n}
\end{equation}

For an $n$-bit string $x$, applying Hadamard gates to all $n$ qubits gives
\begin{equation}
    H^{\otimes n}\ket{x}
    =
    \frac{1}{\sqrt{2^n}}
    \sum_{k\in\{0,1\}^n}(-1)^{k\cdot x}\ket{k},
\end{equation}
where $k \cdot x = \sum_{i=1}^n k_i x_i$ denotes the standard bitwise inner product of the binary strings $k$ and $x$. Applying this to Eq.~\eqref{eq:coset_state_z2n} gives
\begin{align}
    H^{\otimes n}\ket{c+V}
    &=
    \frac{1}{\sqrt{2^{n-p}}}\frac{1}{\sqrt{2^n}}
    \sum_{k\in\{0,1\}^n}
    \sum_{h_1,\ldots,h_{n-p}\in\{0,1\}}
    (-1)^{\sum_{i=1}^{n-p}k_ih_i+\sum_{i=n-p+1}^{n}k_ic_i}
    \ket{k} \nonumber\\
    &=
    \frac{1}{\sqrt{2^{n-p}}}\frac{1}{\sqrt{2^n}}
    \sum_{k\in\{0,1\}^n}
    (-1)^{\sum_{i=n-p+1}^{n}k_ic_i}
    \prod_{i=1}^{n-p}\left(\sum_{h_i\in\{0,1\}}(-1)^{k_ih_i}\right)
    \ket{k}.
\end{align}
The inner product factor is zero unless $k_1=\cdots=k_{n-p}=0$, in which case it equals $2^{n-p}$.
Therefore
\begin{equation}
    H^{\otimes n}\ket{c+V}
    =
    \frac{1}{\sqrt{2^p}}
    \sum_{k_{n-p+1},\ldots,k_n\in\{0,1\}}
    (-1)^{\sum_{i=n-p+1}^{n}k_ic_i}
    \ket{0}^{\otimes(n-p)}\ket{k_{n-p+1}\cdots k_n}.
    \label{eq:hadamard_coset_z2n}
\end{equation}
The coset representative $c$ is eliminated in Eq.~\eqref{eq:hadamard_coset_z2n}.
Consequently, the measurement distribution is independent of $c$. The first $n-p$ bits are always zero, while the remaining $p$ bits are uniformly random. 
\begin{algorithm}[t]
\caption{\(\myqc-0\) Sampling for $V\le \mathbb{Z}_{2^n}$}
\label{alg:transversal_hadamard_sampling}
\KwIn{Oracle access to an HSP instance over $\mathbb{Z}_{2^n}$, sample count $m$}
\KwOut{Hidden subgroup $V=\langle 2^p\rangle$}

Initialize an empty list of samples $\mathcal{B}$\;
\For{$i=1$ \KwTo $m$}{
    Prepare the uniform superposition $\frac{1}{\sqrt{2^n}}\sum_{g\in \mathbb{Z}_{2^n}}\ket{g}\ket{0}$\;
    Query the oracle (measuring the second register to obtain a coset state $\ket{c_i+V}$ is optional)\;
    Apply $H^{\otimes n}$ to the first register\;
    Measure the first register and record the bit string $b^{(i)}\in\{0,1\}^n$\;
    Append $b^{(i)}$ to $\mathcal{B}$\;
}
Find the minimum index $j^*$ (where $j=1$ is the most significant bit) such that $b^{(i)}_{j^*}=1$ for some sample in $\mathcal{B}$\;
\eIf{such a $j^*$ exists}{
    Let $p = n - j^* + 1$\;
}{
    Let $p = 0$\;
}
\Return{$V=\langle 2^p\rangle$}\;
\end{algorithm}

This immediately gives a simple reconstruction procedure for the idealized $\mathbb{Z}_{2^n}$ HSP.
Each application of the oracle-and-measurement routine prepares a random coset state, and Eq.~\eqref{eq:hadamard_coset_z2n} shows that subsequently applying $H^{\otimes n}$ converts that coset state into a sample whose first $n-p$ bits are deterministically zero while the remaining $p$ bits are unbiased.
The resulting sampling-and-recovery routine is summarized in Algorithm~\ref{alg:transversal_hadamard_sampling}.

We note that the physical measurement of the second register in Algorithm~\ref{alg:transversal_hadamard_sampling} is not strictly necessary. By the principle of deferred measurement, tracing out the second register leaves the first register in a mixed state of all possible cosets. Since the measurement distribution is strictly independent of the coset representative $c$, this yields identical measurement statistics. This device-friendly feature avoids the need for mid-circuit measurements, which is highly advantageous for implementations on near-term quantum hardware.

Algorithm~\ref{alg:transversal_hadamard_sampling} is correct because, under the bit-ordering convention above, the free positions occur exactly in the final $p$ bits.
Equivalently, the measurement data have the form
\begin{equation}
    0\cdots 0 * \cdots *,
\end{equation}
where each star denotes an unbiased bit.
Therefore, the classical post-processing only needs to identify the leftmost bit position that ever evaluates to $1$ across all $m$ samples. This single position $j^*$ uniquely determines $p$, and hence determines the subgroup $V=\langle 2^p\rangle$.

The sample complexity is logarithmic in the number of qubits.
Under this leftmost-$1$ decoding strategy, the algorithm only fails if the most significant free bit (at position $n-p+1$) is measured as $0$ in all $m$ independent samples.
Because this specific bit is uniformly random, the probability of it being zero $m$ times in a row is exactly $2^{-m}$.
Thus $m\ge \lceil \log_2(1/\epsilon)\rceil$ samples are sufficient for Algorithm~\ref{alg:transversal_hadamard_sampling} to perfectly identify the hidden subgroup $V$ with probability at least $1-\epsilon$. This tight bound elegantly removes the union-bound dependence on $n$, yielding an optimized sample complexity.

\section{Fisher information for subgroup-generator estimation}
\label{app:qft_fisher_information}
To benchmark the proposed circuit family, we treat the hidden subgroup generator $r$ as the parameter to be inferred from the measurement distribution produced by the standard QFT. Consider the truncated periodic state
\begin{equation}
    R:=\left\lfloor\frac{2^n-1}{r}\right\rfloor+1,
\end{equation}
where $R$ is the number of supported multiples of $r$, and
\begin{equation}
    \ket{\psi_r}
    :=
    \frac{1}{\sqrt{R}}
    \sum_{q=0}^{R-1}\ket{qr}.
\end{equation}
Thus $q=0,\ldots,R-1$ labels precisely the integers $qr$ contained in the
$2^n$-dimensional register.
For the $2^n$-point QFT,
\begin{equation}
    \ket{j}
    \mapsto
    \frac{1}{\sqrt{2^n}}
    \sum_{k=0}^{2^n-1}
    e^{2\pi \ii jk/2^n}\ket{k},
\end{equation}
and therefore
\begin{equation}
    \ket{\psi_r}
    \mapsto
    \frac{1}{\sqrt{R2^n}}
    \sum_{k=0}^{2^n-1}
    \left(
        \sum_{q=0}^{R-1}
        e^{2\pi \ii q r k/2^n}
    \right)\ket{k}.
\end{equation}
The corresponding measurement probability is
\begin{equation}
    \Pr(k|r)
    =
    \frac{1}{R2^n}
    \left|
        \sum_{q=0}^{R-1}
        e^{2\pi \ii q r k/2^n}
    \right|^2
    =
    \frac{1}{R2^n}
    \left(
        \frac{\sin(\pi R r k/2^n)}{\sin(\pi r k/2^n)}
    \right)^2.
\end{equation}

In the continuum approximation, the Fisher-information calculation reduces to a Dirichlet-kernel moment estimate. The result needed for the main-text comparison is summarized in the following lemma.

\begin{lemma}[QFT Fisher-information]
\label{thm:qft_fisher_information}
    In the regime $1\ll r\ll 2^n$, the QFT Fisher information is approximated by 
    \begin{equation}
        \fii_{\mathrm{QFT}}(r,n)
        \approx
        \frac{4\pi^2}{9}\left(\frac{2^{2n}}{r^2}-1\right).
    \end{equation}
\end{lemma}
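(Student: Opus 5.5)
Our plan is to work in the continuum approximation, treating $r$ as a continuous parameter, and to approximate the discrete difference $\Pr(k|r+1)-\Pr(k|r)$ by $\partial_r \Pr(k|r)$. In that approximation the DFI becomes the ordinary Fisher information $\sum_k (\partial_r \Pr)^2/\Pr$.

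\textbf{Step 1: peak structure.} Write $\phi_k := rk/2^n$. The kernel $\bigl(\sin(\pi R\phi_k)/\sin(\pi\phi_k)\bigr)^2$ is concentrated in $r$ narrow peaks near $k\approx j2^n/r$, for $j=0,\dots,r-1$. Each peak has width $\sim 2^n/(rR)\approx 1$ in $k$. Near peak $j$ we set $\phi_k = j+\delta$ and use $\sin(\pi\phi_k)\approx \pm\pi\delta$, so that
\[
\Pr(k|r)\approx \frac{1}{R2^n}\frac{\sin^2(\pi R\delta)}{\pi^2\delta^2}.
\]
We also replace $R$ by $2^n/r$. With this substitution, $R\delta = R(rk/2^n - j)\approx k - j2^n/r$, up to corrections of order $r/2^n$.

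\textbf{Step 2: the score.} Holding $k$ fixed, we have $\partial_r\phi_k = k/2^n\approx j/r$ near peak $j$. Differentiating $\ln\Pr$ gives the score
\[
\partial_r \ln \Pr = 2\pi\left(\cot(\pi R\delta)-\frac{1}{\pi R\delta}\right) R\,\partial_r\delta + (\text{terms from } \partial_r R),
\]
where $\partial_r\delta = k/2^n$.

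\textbf{Step 3: the Fisher sum.} The Fisher sum is $\sum_k \Pr\,(\partial_r\ln\Pr)^2$. We convert the sum over $k$ inside each peak into an integral over $u=R\delta$. The $j$-dependence enters through $(k/2^n)^2 R^2\approx j^2/r^2 \cdot R^2$. Summing $j^2$ over $j<r$ gives $r^3/3$. Averaging over peaks therefore produces $\frac{1}{r}\sum_j j^2 R^2/r^2 \approx R^2/3 = 2^{2n}/(3r^2)$. This matches the $1/9$ structure only if the integral factor contributes $4\pi^2/3$.

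We would compute that remaining factor through the Dirichlet-kernel moment
\[
\int \frac{\sin^2(\pi u)}{\pi^2u^2}\,4\pi^2\Bigl(\cot(\pi u)-\tfrac{1}{\pi u}\Bigr)^2 du,
\]
normalised by the peak weight. We expect this to yield a constant that, combined with the first factor $1/3$, gives $4\pi^2/9$. A cleaner route is to evaluate the Fisher information exactly as a discrete sum. Since $\Pr(k|r)=|\hat f(k)|^2$ with $\hat f$ the Fourier transform of a box of $R$ ones, the quantity $\sum_k|\partial_r \hat f|^2$ can be obtained via Parseval: $\partial_r$ acting on $e^{2\pi\ii qrk/2^n}$ brings down $2\pi\ii qk/2^n$. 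Taking moments over $q$, namely $\langle q^2\rangle-\langle q\rangle^2 = (R^2-1)/12$, together with the $k$-moment $\langle k^2\rangle\approx 2^{2n}/3$ averaged across peaks, should produce exactly
\[
4\pi^2\cdot\frac{R^2-1}{12}\cdot\frac{1}{3}\cdot\frac{\text{const}}{\cdots}.
\]
Under $R\approx 2^n/r$ this takes the form $\frac{4\pi^2}{9}\bigl(\frac{2^{2n}}{r^2}-1\bigr)$. The ``$-1$'' is explained naturally by $R^2-1$, which supports this variance-based route: the pure-state Fisher information for real-valued amplitudes is $4\,\mathrm{Var}$ of the generator.

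\textbf{Main obstacle.} Several points need to be made consistent. First, the Fisher information uses $\Pr$, not amplitudes, so we must argue that the per-outcome phases of the amplitudes are negligible. Equivalently, the classical Fisher information must approach the $4\,\mathrm{Var}$ quantum value, since the phase factor $e^{\pi\ii(R-1)\phi_k}$ carries the centring. Second, we must control the $\partial_r R$ jumps and the error in replacing the finite difference by a derivative. The latter is justified because the peaks shift by $j/r\cdot\ldots$, and $r\gg1$ keeps the shift per unit $r$ small relative to the structure being resolved. We would try the Parseval/variance computation first and only fall back to the peak integral to check constants.
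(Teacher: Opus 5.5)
Your Parseval/variance route is, once made precise, the paper's own proof. The paper writes the outcome density directly as $f(x,r)=D_R(rx)^2/(\pi R)$, with $x=\pi k/2^n$ and the real Dirichlet kernel $D_R(u)=\sum_{m=0}^{R-1}e^{\ii(R-1-2m)u}$, so the centering phase never appears. Holding $R$ fixed, it gets $f\,(\partial_r\ln f)^2=\frac{4x^2}{\pi R}[D_R'(rx)]^2$ and expands $[D_R']^2=A_0+2\sum_{j\ge1}A_j\cos(2ju)$. It then keeps only $A_0=\sum_m(R-1-2m)^2=R(R^2-1)/3$, which is your $4R\,\mathrm{Var}(q)$, against $\int_0^\pi x^2\,dx=\pi^3/3$, which is your $k$-moment $\langle(\pi k/2^n)^2\rangle\approx\pi^2/3$. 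The discarded oscillatory modes are exactly what your ``averaging across peaks'' ignores, and the paper controls them through $M_j(r)=\mathcal{O}(r^{-2})$. Your peak route is a valid local version of the same computation.

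As written, though, neither route is carried out: the decisive constant is read off from the target (``we expect'', ``$\text{const}/\cdots$''). It does come out right, and you should show it. In the variance route the missing factor is the $4$ of the real-amplitude Fisher formula: $4\cdot(2\pi)^2\cdot\frac13\cdot\frac{R^2-1}{12}=\frac{4\pi^2}{9}(R^2-1)$. In the peak route the integrand is $4[\mathrm{sinc}'(u)]^2$ with $\mathrm{sinc}(u)=\sin(\pi u)/(\pi u)$. Plancherel gives $4\int_{-1/2}^{1/2}(2\pi t)^2\,dt=4\pi^2/3$, hence $\frac{4\pi^2}{9}R^2$. The ``$-1$'' is lost in the sinc approximation, but it is below the accuracy of the lemma anyway, since $R=2^n/r+\mathcal{O}(1)$ and the dropped modes are of order $R^2/r^2$.

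Two of your stated obstacles are mis-framed and would fail if pursued literally.

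First, the phases $e^{\ii\pi(R-1)rk/2^n}$ are not negligible. Write $a_k=e^{\ii\theta_k}b_k$ with $b_k=D_R(\pi rk/2^n)/\sqrt{R2^n}$ real. The quantum Fisher information of the output family then exceeds the classical one by $4\,\mathrm{Var}_k(\partial_r\theta_k)\approx\pi^2R^2/3$, about $3/4$ of the answer, so the classical value does not approach the quantum one. What you need instead is the exact identity $\sum_k(\partial_r b_k^2)^2/b_k^2=4\sum_k(\partial_r b_k)^2$ for the real, centered amplitudes. That identity is what turns $\langle q^2\rangle$ into $\mathrm{Var}(q)$, and it is why the paper works with $D_R^2$ from the outset.

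Second, the claim that $r\gg1$ justifies replacing $\Pr(k|r+1)-\Pr(k|r)$ by $\partial_r\Pr(k|r)$ is false. A unit step in $r$ moves the $j$-th peak by about $j2^n/r^2$ bins (up to $\sim 2^n/r\gg1$), while peaks have width $\mathcal{O}(1)$. For dyadic $r$ the literal finite-difference sum is even infinite, since $\Pr(\cdot|r)$ vanishes off its $r$ peaks while $\Pr(\cdot|r+1)$ does not. The paper does not attempt this step. It simply computes the continuum Fisher information $\int_0^\pi f\,(\partial_r\ln f)^2\,dx$ at fixed $R$, which also disposes of your $\partial_r R$ terms. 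You should likewise state that this is the quantity being approximated, rather than claim the linearization is controlled.
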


\begin{proof}
    In the regime $r \ll 2^n$, introduce the continuum measurement coordinate
    \begin{equation}
        x:=\frac{\pi k}{2^n}\in[0,\pi].
    \end{equation}
    The corresponding continuum approximation to the QFT probability density is
    \begin{equation}
        f(x,r):=
        \frac{1}{\pi R}
        \left(
            \frac{\sin(Rrx)}{\sin(rx)}
        \right)^2.
    \end{equation}
    The QFT Fisher information with respect to the generator $r$ is
    \begin{equation}
        \fii_{\mathrm{QFT}}(r,n)
        =
        \int_{0}^{\pi}
        f(x,r)
        \left[
            \pdv{}{r}\ln f(x, r)
        \right]^2
        dx,
    \end{equation}
    with score function
    \begin{equation}
        s_r(x)
        :=
        \pdv{}{r}\ln f(x,r)
        =
        2Rx\cot(Rrx)-2x\cot(rx),
        \qquad
        s_r(x)=\text{score for the generator }r.
    \end{equation}
    Define
    \begin{equation}
        D_R(u) := \frac{\sin(Ru)}{\sin u},
        \qquad
        D_R(u)=\text{order-}R\text{ Dirichlet kernel}.
    \end{equation}
    Then $f(x, r)=D_R(rx)^2/(\pi R)$, and the score function implies
    \begin{equation}
        f(x, r)\left[\pdv{}{r}\ln f(x, r)\right]^2
        =
        \frac{4x^2}{\pi R}\left[D_R'(rx)\right]^2.
    \end{equation}
    Thus the Fisher-information integral is reduced to the square of a finite trigonometric polynomial. Once that polynomial is expanded into cosine modes, the remaining calculation involves only the moments of $x^2$ against those modes.

    We begin by expressing the Dirichlet kernel $D_R(u)$ in its exponential sum form. By shifting the summation index to center the phases, it can be written as a symmetric finite series:
    \begin{equation}
        D_R(u)=\frac{\sin(Ru)}{\sin u}
        =
        \sum_{m=0}^{R-1}e^{\ii(R-1-2m)u},
    \end{equation}
    Differentiating this series term-by-term with respect to $u$ yields:
    \begin{equation}
        D_R'(u)
        =
        \ii\sum_{m=0}^{R-1}(R-1-2m)e^{\ii(R-1-2m)u}.
    \end{equation}
    To compute the squared magnitude $[D_R'(u)]^2$, we multiply this sum by itself. By collecting terms that produce the same relative frequency difference, we can perfectly recast the product as an even cosine series:
    \begin{equation}
        [D_R'(u)]^2
        =
        A_0 + 2\sum_{j=1}^{R-1}A_j\cos(2ju),
    \end{equation}
    where $A_j$ denotes the cosine-series coefficient at frequency $2j$. These
    coefficients are obtained by evaluating the convolution sum over amplitudes
    with frequency offset $j$. To evaluate this sum explicitly, we expand the
    summand as a quadratic in $m$ and apply the standard summation formulas for
    $\sum m$ and $\sum m^2$:
    \begin{align}
        A_j
        &= -\sum_{m=0}^{R-1-j} (R-1-2m)(2j-R+1+2m) \nonumber\\
        &= \sum_{m=0}^{R-1-j} \left[ 4m^2 - 4(R-1-j)m + (R-1-j)^2 - j^2 \right] \nonumber\\
        &= \frac{R-j}{3} \left[ (R-j-1)^2 + 2(R-j-1) - 3j^2 \right] \nonumber\\
        &= \frac{(R-j)(R^2-2Rj-2j^2-1)}{3}, \qquad j=0,\ldots,R-1.
    \end{align}
    In particular, setting $j=0$ isolates the non-oscillatory constant term:
    \begin{equation}
        A_0=\frac{R(R^2-1)}{3}.
    \end{equation}
    Recall from our earlier derivation of the score function that the Fisher information integral takes the form:
    \begin{equation}
        \fii_{\mathrm{QFT}}(r,n)
        =
        \frac{4}{\pi R}\int_0^\pi x^2[D_R'(rx)]^2\,dx,
    \end{equation}
    Substituting our exact cosine expansion of $[D_R'(rx)]^2$ into this integral distributes the integration across the individual frequency modes:
    \begin{equation}
        \fii_{\mathrm{QFT}}(r,n)
        =
        \frac{4}{\pi R}
        \left[
            A_0\int_0^\pi x^2\,dx
            +
            2\sum_{j=1}^{R-1}A_j\int_0^\pi x^2\cos(2jrx)\,dx
        \right].
    \end{equation}
    The problem is now reduced to evaluating two elementary definite integrals
    over $[0,\pi]$. Define the non-oscillatory and oscillatory moments by
    \begin{align}
        M_0&:=\int_0^\pi x^2\,dx=\frac{\pi^3}{3},\\
        M_j(r)&:=\int_0^\pi x^2\cos(2jrx)\,dx \nonumber\\
        &=
        \frac{
            4\pi jr\cos(2\pi jr)
            +(4\pi^2j^2r^2-2)\sin(2\pi jr)
        }{(2jr)^3},
        \qquad j\geq 1.
    \end{align}
    Here $M_0$ is the non-oscillatory second moment, and $M_j(r)$ is the
    oscillatory second moment at mode $j$.
    Inserting these evaluated moments, along with the explicit expressions for $A_0$ and $A_j$, back into the expanded Fisher information equation gives:
    \begin{align}
        \fii_{\mathrm{QFT}}(r,n)
        &=
        \frac{4}{\pi R}
        \left[
            \frac{R(R^2-1)}{3}\cdot\frac{\pi^3}{3}
            +
            \frac{2}{3}
            \sum_{j=1}^{R-1}
            (R-j)(R^2-2Rj-2j^2-1)\,M_j(r)
        \right] \nonumber\\
        &=
        \frac{4\pi^2}{9}(R^2-1)
        +
        \frac{8}{3\pi R}
        \sum_{j=1}^{R-1}
        (R-j)(R^2-2Rj-2j^2-1)\,M_j(r),
    \end{align}
    which is the stated exact formula. 
    
    Finally, we analyze the asymptotic behavior in the standard regime of interest, $1 \ll r \ll R$. In this limit, the hidden parameter is large enough to suppress oscillations, yet small enough that its multiples are heavily supported in the register. The explicit form of the oscillatory moments shows that $M_j(r)=\mathcal{O}(r^{-2})$ for any fixed mode $j$. Since the leading constant term scales as $\mathcal{O}(R^2)$ and the summation over the modes scales proportionally to $\mathcal{O}(R^2 r^{-2})$, the relative magnitude of the oscillatory corrections is heavily suppressed by a factor of $1/r^2$. Because these high-frequency corrections are strictly subleading in this large-$r$ regime, the behavior is overwhelmingly governed by the constant contribution. Keeping only this non-oscillatory term and substituting $R \approx 2^n/r$ therefore gives:
    \begin{equation}
        \fii_{\mathrm{QFT}}(r,n)
        \approx
        \frac{4\pi^2}{9}\left(\frac{2^{2n}}{r^2}-1\right).
        \label{eq:qft_dfi_asymptotic}
    \end{equation}
    This proves the lemma.
\end{proof}

\color{revisioncolor}

\begin{lemma}[HP-1 period-law overlap]
\label{lem:hp1_period_law_overlap}
For the fixed-phase \myqc-1 circuit, let \(\mathrm{Pr}(\cdot\mid s)\) and
\(\mathrm{Pr}(\cdot\mid t)\) denote the coherent output probability laws associated with two candidate
periods \(s,t\).  Let \(\nu_2(m)\) denote the largest integer \(a\) such that
\(2^a\) divides \(m\), and set \(a=\nu_2(\gcd(s,t))\).  Then, for a constant
\(\kappa_a\) depending on \(a\), their overlap obeys the following statistical
law:
\begin{equation}
    \sum_x \mathrm{Pr}(x\mid s)\mathrm{Pr}(x\mid t)
    \approx
    \frac{1}{2^n}
    +
    \kappa_a\,\frac{2^a}{st},
    \qquad
    a=\nu_2(\gcd(s,t)).
    \label{eq:hp1_period_law_overlap_placeholder}
\end{equation}
\end{lemma}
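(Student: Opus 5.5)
We start from the explicit matrix elements derived in the proof of Theorem~\ref{thm:hp_circuits_are_sinv}. For HP-1 with skeleton $\Lambda_0,\Lambda_1$ they read $\bra{x}\uqc{1}\ket{y}=2^{-n/2}\exp[\ii\sum_{i,j}\tilde\theta_{ij}x_iy_j]$. We write the phase as $\ii\pi\, x\cdot y + \ii\, x^{\Lambda_0}\!\cdot\Theta\, y^{\Lambda_1}$, i.e.\ a Walsh sign times a bilinear phase coupling only the outputs on $\Lambda_0$ to the inputs on $\Lambda_1$.

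Summing over the period state $\ket{\psi_s}=R_s^{-1/2}\sum_{q<R_s}\ket{qs}$ gives an amplitude $A_s(x)=(2^nR_s)^{-1/2}\sum_q e^{\ii\phi_x(qs)}$. Hence
\begin{equation}
\sum_x \mathrm{Pr}(x\mid s)\mathrm{Pr}(x\mid t)=\frac{1}{2^{2n}R_sR_t}\sum_{q,q',p,p'}\sum_x e^{\ii[\phi_x(qs)-\phi_x(q's)-\phi_x(pt)+\phi_x(p't)]}.
\end{equation}
The phase is linear in $x$ for each fixed input tuple, so the inner sum over $x$ factorizes qubit by qubit. Each factor is $\prod_i(1+e^{\ii\alpha_i})$, with $\alpha_i$ a combination of the four inputs' bits.

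We then split the quadruple sum into a diagonal part and an off-diagonal part. The diagonal part consists of the tuples with $\{qs,p't\}=\{q's,pt\}$ as multisets, which happens in particular when $qs=q's$ and $pt=p't$. Its $x$-sum equals $2^n$ exactly, and it yields the term $\frac{1}{2^n}\cdot\frac{R_sR_t}{R_sR_t}$. This accounts for the leading $1/2^n$.

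The second contribution comes from \emph{collisions} between the two lattices, namely tuples with $qs=pt$ and $q's=p't$. These inputs lie in $\mathrm{lcm}(s,t)\mathbb{Z}$, and the count of such pairs is $(2^n/\mathrm{lcm}(s,t))^2$. We treat the near-collisions as a separate class. These are tuples whose bit differences are confined to the low-order bits, where the HP-1 phases $\pi/2^{|i-j|}$ are nearly trivial on the high qubits. For such tuples the factor $\prod_i(1+e^{\ii\alpha_i})$ stays of order $2^n$ rather than cancelling. The dyadic structure enters here: two multiples $qs$ and $pt$ share their lowest $\nu_2$ bits exactly when the $2$-adic valuations line up. Counting the tuples whose difference vector has vanishing Walsh character on the high bits should give an effective count of order $2^a R_sR_t\cdot 2^n/(st)\cdot$const. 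After normalization this produces the claimed $\kappa_a 2^a/(st)$. We would absorb the combinatorial constant of the low-bit phase sums into $\kappa_a$.

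The remaining generic tuples carry pseudo-random phases. We would bound their total contribution heuristically, treating the factors as independent across qubits so that they average to a relative size $O(2^{-n})$. This is why the statement is a statistical "$\approx$" law.

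The main obstacle is the near-collision count. The HP-1 phases do not give exact orthogonality the way QFT characters do, so we cannot simply apply a character-sum identity. We would first prove the count exactly in the $\theta=0$ (HP-0) limit, where the $x$-sum is an exact indicator of equality of the high bits. In that limit the overlap reduces to counting pairs of multiples of $s$ and $t$ that agree modulo $2^{n-k}$, and the $2^a$ factor emerges from $\gcd$ and valuation arithmetic. We would then treat the controlled phases perturbatively, arguing that they only renormalize the constant $\kappa_a$. This last step we would support numerically rather than prove rigorously.
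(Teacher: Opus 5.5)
Your opening steps match the paper's: the exact quadruple expansion, the qubit-wise factorization of the $x$-sum into $\prod_a[1+e^{\pi\ii(C\Delta)_a}]$, and the $q=q'$, $\ell=\ell'$ terms giving $1/2^n$. The paper adds one exact step you do not use. The single-period off-diagonal terms ($q\neq q'$ with $\ell=\ell'$, or vice versa) vanish identically: for a nonzero bit difference $u$, some row has $(Cu)_a=\pm1$ and so contributes a factor $\cos(\pi/2)=0$. This is just unitarity of $\uqc{1}$. The paper then does not classify the remaining tuples at all. It takes the whole mixed double sum $B^{(n)}_{s,t}$ and closes the argument by fitting $B^{(n)}_{s,t}\approx\kappa_a2^a/(st)$ numerically.

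The genuine gap is where you locate the $\kappa_a2^a/(st)$ term, and it would derail your plan. Exact collisions $qs=pt$, $q's=p't$ give only about $\gcd(s,t)^2/(2^nst)$. The ``generic'' tuples you discard are exactly where the term lives. Under your own independence heuristic, each generic $x$-sum averages to $O(1)$, i.e.\ relative size $2^{-n}$. But there are $R_s^2R_t^2$ of them, so after your normalization $2^{-2n}(R_sR_t)^{-1}$ they total $\Theta(R_sR_t/2^{2n})=\Theta(1/(st))$, the same order as the target.

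Your proposed exact base case confirms this. In HP-0 the $x$-sum is $2^n$ if every $\Delta_a$ is even and $0$ otherwise. So the overlap is a four-term XOR (additive-energy) count, not a count of pairs agreeing modulo $2^{n-k}$. All four inputs share $a=\nu_2(\gcd(s,t))$ trailing zeros, so $\Delta$ \emph{vanishes} on the low-order bits rather than being confined to them. The parity constraint then acts only on the remaining $n-a$ bits. Heuristically a bulk fraction $\approx2^{a-n}$ of all quadruples satisfies it, giving $\approx2^aR_sR_t/2^{2n}\approx2^a/(st)$. Equivalently, the $2^a$ output strings supported on those trailing positions see zero phase, and each has $\mathrm{Pr}(x\mid s)=R_s/2^n$. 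That gives a rigorous HP-0 lower bound of $2^aR_sR_t/2^{2n}$. Your asserted count $2^aR_sR_t2^n/(st)$ therefore has the right size, but it belongs to the bulk, not to a near-collision class, and you do not derive it.

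Finally, the HP-1 phases are not a small perturbation of HP-0. The nearest-neighbour angle is $\pi/2$, so an entry $\Delta_j=\pm2$ on a second-layer qubit $j$ shifts $(C\Delta)_{j\pm1}$ by $\pm1$. That flips parities and changes which tuples survive, so ``renormalizing $\kappa_a$'' perturbatively is not justified. A repaired plan would drop the collision/near-collision split and use the exact single-period cancellation. It would then close the entire mixed sum numerically as the paper does, with the trailing-zero argument serving only as the heuristic origin of the $2^a$.
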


\begin{proof}
Set \(N:=2^n\).
The goal is to evaluate the period-law overlap by expanding both coherent
probability laws and then carrying out the sum over output strings \(x\).
This separates the overlap into a diagonal contribution, two single-period
off-diagonal contributions, and one genuinely mixed off-diagonal contribution.
For the fixed-phase \myqc-1 circuit, let
\(C=C_{\mathrm{HP1}}\in\mathbb R^{n\times n}\) denote the HP-1 phase matrix
normalized by
\begin{equation}
    \bra{x}\uqc{1}\ket{y}
    =
    \frac{1}{\sqrt N}
    \exp\!\left(\pi\ii\,x^T C y\right),
    \qquad x,y\in\{0,1\}^n .
    \label{eq:hp1_overlap_phase_matrix_definition}
\end{equation}
Equivalently, \(C\) is the directed fixed-phase HP-1 matrix with nonzero
entries
\begin{equation}
    C_{ij}
    =
    \begin{cases}
        \dfrac{1}{2^{|i-j|}},
        & i=j\ \text{or}\ (i\in\Lambda_1,\ j\in\Lambda_2),\\[1mm]
        0, & \text{otherwise},
    \end{cases}
    \label{eq:hp1_overlap_phase_matrix_entries}
\end{equation}
For a candidate period \(s\), define
\begin{equation}
    R_s
    :=
    \left|\{q\in\mathbb Z_{\ge0}:0\le qs<N\}\right|
    =
    \left\lfloor\frac{N-1}{s}\right\rfloor+1 .
    \label{eq:hp1_overlap_period_count}
\end{equation}
Let \(\mathbf b_m\in\{0,1\}^n\) denote the \(n\)-bit binary representation of
the integer \(m\).  Using Eq.~\eqref{eq:hp1_overlap_phase_matrix_definition},
the probability law \(\mathrm{Pr}(\cdot\mid s)\) is
\begin{align}
    \mathrm{Pr}(x\mid s)
    &=
    \left|
        \frac{1}{\sqrt{R_s}}
        \sum_{q=0}^{R_s-1}
        \bra{x}\uqc{1}\ket{\mathbf b_{qs}}
    \right|^2 \nonumber\\
    &=
    \frac{1}{NR_s}
    \left|
        \sum_{q=0}^{R_s-1}
        \exp\!\left(\pi\ii\,x^T C\mathbf b_{qs}\right)
    \right|^2,
    \qquad x\in\{0,1\}^n .
    \label{eq:hp1_overlap_probability_law}
\end{align}
Substituting Eq.~\eqref{eq:hp1_overlap_probability_law} and the corresponding
expression for \(\mathrm{Pr}(\cdot\mid t)\) into the overlap on the left-hand side of
Eq.~\eqref{eq:hp1_period_law_overlap_placeholder} gives
	\begin{align}
	    \sum_x \mathrm{Pr}(x\mid s)\mathrm{Pr}(x\mid t)
	    &=
	    \frac{1}{N^2R_sR_t}
    \sum_{x\in\{0,1\}^n}
    \left|
        \sum_{q=0}^{R_s-1}
        \exp\!\left(\pi\ii\,x^T C\mathbf b_{qs}\right)
    \right|^2 \nonumber\\
    &\quad\times
    \left|
        \sum_{\ell=0}^{R_t-1}
        \exp\!\left(\pi\ii\,x^T C\mathbf b_{\ell t}\right)
    \right|^2 .
    \label{eq:hp1_overlap_substituted_probability_laws}
	\end{align}
We next perform the \(x\)-sum exactly.  Since \(x\) ranges over all binary
strings, the sum factorizes bit by bit.  The following kernel records the real
contribution of each conjugate pair that appears after expanding the two
squared moduli.
For \(v\in\mathbb Z^n\), the associated conjugate-pair identity defines the
real cosine kernel \(\Omega(v)\) by
\begin{equation}
    \begin{aligned}
    \prod_{a=1}^n
    \left[1+e^{\pi\ii(Cv)_a}\right]
    +
    \prod_{a=1}^n
    \left[1+e^{-\pi\ii(Cv)_a}\right]
    &=
    2N
    \cos\!\left[
        \frac{\pi}{2}
        \sum_{a=1}^n
        \bigl(Cv\bigr)_a
    \right]
    \prod_{a=1}^n
    \cos\!\left[
        \frac{\pi}{2}
        \bigl(Cv\bigr)_a
    \right]
    =:
    2N\Omega(v).
    \end{aligned}
    \label{eq:hp1_overlap_real_cosine_kernel}
\end{equation}
	For fixed \(q,q',\ell,\ell'\), set
	\[
	    \Delta_{q,q',\ell,\ell'}
    :=
    \mathbf b_{qs}-\mathbf b_{q's}
    +\mathbf b_{\ell t}-\mathbf b_{\ell't}.
\]
Expanding the two squared moduli in
Eq.~\eqref{eq:hp1_overlap_substituted_probability_laws}, exchanging the order
of summation, and substituting the \(x\)-sum kernel explicitly gives
\begin{align}
    \sum_x \mathrm{Pr}(x\mid s)\mathrm{Pr}(x\mid t)
    &=
    \frac{1}{N^2R_sR_t}
    \sum_{q,q'=0}^{R_s-1}
    \sum_{\ell,\ell'=0}^{R_t-1}
    \prod_{a=1}^n
    \left[
        1+\exp\!\left(
            \pi\ii\,\bigl(C\Delta_{q,q',\ell,\ell'}\bigr)_a
        \right)
    \right] \nonumber\\
    &=
    \Bigg[
        \frac{1}{N}
        +\frac{2}{NR_s}
        \sum_{0\le q<q'<R_s}
        \Omega\!\left(\Delta_{q,q',0,0}\right)
        +\frac{2}{NR_t}
        \sum_{0\le \ell<\ell'<R_t}
        \Omega\!\left(\Delta_{0,0,\ell,\ell'}\right)
        \nonumber\\
    &\qquad\qquad
        +\frac{2}{NR_sR_t}
        \sum_{0\le q<q'<R_s}
        \sum_{0\le \ell<\ell'<R_t}
        \left[
            \Omega\!\left(\Delta_{q,q',\ell,\ell'}\right)
            +
            \Omega\!\left(\Delta_{q,q',\ell',\ell}\right)
        \right]
    \Bigg].
	    \label{eq:hp1_overlap_x_sum_first}
	\end{align}
\begin{figure}[!t]
    \centering
    \includegraphics[width=0.8\columnwidth]{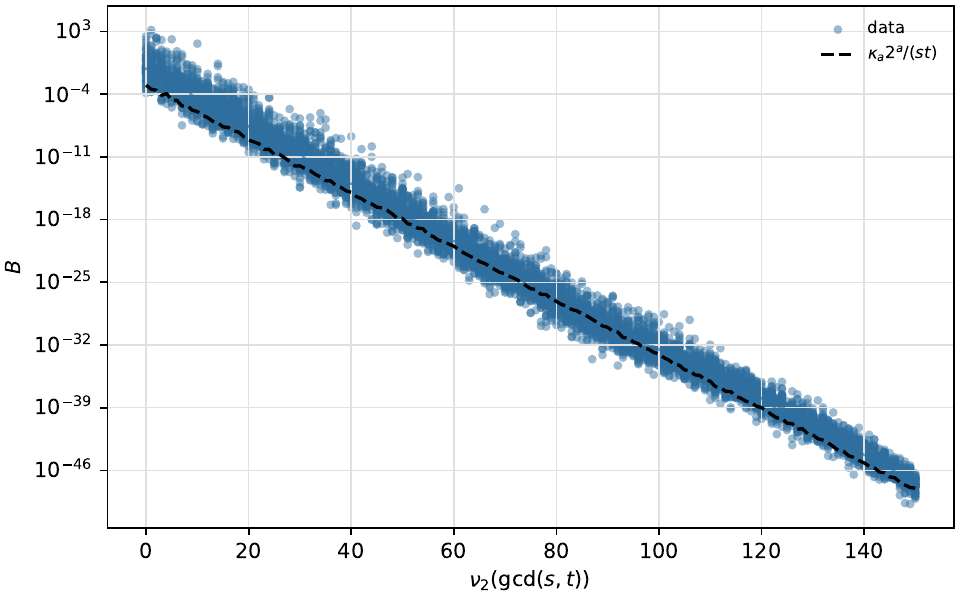}
    \caption{\label{fig:hp1_mixed_B_statistical_law}
    \textbf{Linear regression between\(B_{s,t}^{(n)}\) and $\nu_2$.  }
    Conditioned Monte Carlo estimate of \(B_{s,t}^{(n)}\) for fixed-phase
    \myqc-1 at \(n=200\).  For each \(a=0,\ldots,150\), \(100\) dyadic
    pairs are sampled.  The dashed curve shows
    Eq.~\eqref{eq:hp1_mixed_B_statistical_law}.}
\end{figure}
Here \(\Delta_{q,q',0,0}=\mathbf b_{qs}-\mathbf b_{q's}\) and
\(\Delta_{0,0,\ell,\ell'}=\mathbf b_{\ell t}-\mathbf b_{\ell't}\), since the
paired diagonal difference is zero.
The first equality is the fully substituted form of the \(x\)-sum: no additional
kernel notation is being used.  The passage from the first line of
Eq.~\eqref{eq:hp1_overlap_x_sum_first} to the second line comes from grouping
ordered indices into the smallest real orbits.  The conjugate-pair identity in
Eq.~\eqref{eq:hp1_overlap_real_cosine_kernel} is applied to each conjugate pair.
If \(q=q'\) and \(\ell=\ell'\), then \(\Delta=0\), the product equals \(N\),
and the \(R_sR_t\) such terms give the first normalized contribution \(1/N\)
inside the brackets.
If \(q\ne q'\) while \(\ell=\ell'\), then for each unordered pair \(q<q'\)
and each of the \(R_t\) diagonal values of \(\ell\), the two ordered terms
\((q,q')\) and \((q',q)\) have opposite vectors
\(\pm(\mathbf b_{qs}-\mathbf b_{q's})\), giving the normalized coefficient
\(2/(NR_s)\).
The case \(q=q'\), \(\ell\ne\ell'\) is identical and gives the coefficient
\(2/(NR_t)\).  Finally, when both ordered pairs are off diagonal, the four
terms generated by
\((q,q')\leftrightarrow(q',q)\) and
\((\ell,\ell')\leftrightarrow(\ell',\ell)\) split into the two conjugate pairs
\(\pm\Delta_{q,q',\ell,\ell'}\) and
\(\pm\Delta_{q,q',\ell',\ell}\), giving the final double sum with normalized
coefficient \(2/(NR_sR_t)\).  Thus the remaining sums run over unordered indices
\(q<q'\) and \(\ell<\ell'\).  Individual \(\Omega\)-terms need not be
nonnegative; the nonnegativity of the whole expression follows from its equality
to the probability overlap \(\sum_x\mathrm{Pr}(x\mid s)\mathrm{Pr}(x\mid t)\).

The two single-period off-diagonal sums in
Eq.~\eqref{eq:hp1_overlap_x_sum_first} are in fact killed by a zero factor in
the cosine product.  For \(q\ne q'\), set
\begin{equation}
    u
    =
    \mathbf b_{qs}-\mathbf b_{q's}.
    \label{eq:hp1_single_s_difference_vector}
\end{equation}
Since \(u\ne0\) and each component of \(u\) belongs to \(\{-1,0,1\}\), some
component \(u_a\) is odd.  If a changed component lies in \(\Lambda_2\), then
row \(a\) of \(C\) is diagonal and gives the zero factor
\(\cos(\pi u_a/2)=0\).  If no component in \(\Lambda_2\) changes, then some
component in \(\Lambda_1\) changes; all \(\Lambda_2\)-tails vanish in that row,
so again \((Cu)_a=u_a=\pm1\) and the same zero factor appears.  The same
argument with \(q=q'\) gives the corresponding single-\(t\) cancellation:
\begin{equation}
    \begin{aligned}
    \Omega\!\left(\Delta_{q,q',0,0}\right)&=0,
    && q\ne q',\\
    \Omega\!\left(\Delta_{0,0,\ell,\ell'}\right)&=0,
    && \ell\ne\ell'.
    \end{aligned}
    \label{eq:hp1_single_omega_vanishes}
\end{equation}
The mixed term is different because two such bit-difference vectors are added.
Writing
\begin{equation}
    \Delta_{q,q',\ell,\ell'}
    =
    u+w,
    \qquad
    w=\mathbf b_{\ell t}-\mathbf b_{\ell't},
    \label{eq:hp1_mixed_difference_sum}
\end{equation}
one has
\begin{equation}
    u_a,w_a\in\{-1,0,1\},
    \qquad
    \bigl(\Delta_{q,q',\ell,\ell'}\bigr)_a
    \in\{-2,-1,0,1,2\}.
    \label{eq:hp1_mixed_difference_components}
\end{equation}
For the mixed off-diagonal contribution, we define the rescaled quantity
\begin{equation}
    B_{s,t}^{(n)}
    :=
    \frac{1}{2^{n-1}R_sR_t}
    \sum_{0\le q<q'<R_s}
    \sum_{0\le \ell<\ell'<R_t}
    \left[
        \Omega\!\left(\Delta_{q,q',\ell,\ell'}\right)
        +
        \Omega\!\left(\Delta_{q,q',\ell',\ell}\right)
    \right].
    \label{eq:hp1_mixed_B_definition}
\end{equation}
This is the only term in Eq.~\eqref{eq:hp1_overlap_x_sum_first} that is not
forced to vanish by the single-period cancellation.  The normalization by
\(2^{n-1}R_sR_t\) removes the explicit system-size factor from the mixed double
sum, so that different \(n\) and different dyadic scales can be compared on
the same plot.
Numerically, we observe a simple statistical law for this mixed contribution.
For dyadic periods \(s=2^a u\) and \(t=2^a v\), with \(u,v\) odd and
\(a=\nu_2(\gcd(s,t))\), the conditioned Monte Carlo estimate in
Fig.~\ref{fig:hp1_mixed_B_statistical_law} is well described by
\begin{equation}
    B_{s,t}^{(n)}
    \approx
    \kappa_a\,\frac{2^a}{st}.
    \label{eq:hp1_mixed_B_statistical_law}
\end{equation}
Thus Fig.~\ref{fig:hp1_mixed_B_statistical_law} supplies the empirical closure
for the remaining mixed double sum: after the above normalization, its leading
dependence is \(2^a/(st)\), up to the coefficient \(\kappa_a\).
At the level of bit differences, an odd change in one period difference can be
cancelled by the other difference, or two equal changes can combine into an even
component, for example \(1+(-1)=0\), \(1+1=2\), and \((-1)+(-1)=-2\).
These cancellations can remove the zero cosine factor, so the mixed
\(\Omega\)-terms are not identically zero, even though they are numerically
sparse for random off-diagonal indices.

Finally, substituting the single-period cancellation
Eq.~\eqref{eq:hp1_single_omega_vanishes} and the statistical law
Eq.~\eqref{eq:hp1_mixed_B_statistical_law} into
Eq.~\eqref{eq:hp1_overlap_x_sum_first}, together with the definition of
\(B_{s,t}^{(n)}\) in Eq.~\eqref{eq:hp1_mixed_B_definition}, gives
\begin{equation}
    \sum_x \mathrm{Pr}(x\mid s)\mathrm{Pr}(x\mid t)
    \approx
    \frac{1}{2^n}
    +
    \kappa_a\,\frac{2^a}{st},
    \qquad
    a=\nu_2(\gcd(s,t)).
    \label{eq:hp1_overlap_statistical_law}
\end{equation}
This is the overlap formula stated in Eq.~\eqref{eq:hp1_period_law_overlap_placeholder}.
\end{proof}

\begin{theoremrestatement}{thm:hp1_first_order_fisher_transfer_maintext}{ln-linear DFI estimation}
For fixed-phase \myqc-1, the minimum DFI follows the ln-linear lower bound
under \(2<r<2^{n/4}\),
\begin{equation}
    \fii_{\min}(n)
    \gtrsim
    \exp(kn+b),
    \label{eq:hp1_restricted_window_ln_linear_bound_appendix}
\end{equation}
for constants \(k\) and \(b\).
\end{theoremrestatement}

\begin{proof}[Proof of Theorem~\ref{thm:hp1_first_order_fisher_transfer_maintext}]
Set \(N:=2^n\).  We begin directly from the DFI in Eq.~\eqref{eq:DFI}:
\begin{equation*}
    \fii(r,n)
    =
    \sum_{x\in\{0,1\}^n}
    \frac{\left[\Pr(x\mid r+1)-\Pr(x\mid r)\right]^2}
    {\Pr(x\mid r)}.
\end{equation*}
Write \(\Delta_r(x):=\Pr(x\mid r+1)-\Pr(x\mid r)\).  Since every DFI
summand is nonnegative, we may retain only those outputs for which the
denominator is exponentially small, \(\Pr(x\mid r)<2/N=2^{1-n}\), while the
finite difference is sufficiently large.  For any constant \(\tau>0\),
independent of \(n\) and \(r\), define
\begin{equation}
    A_\tau(n,r)
    :=
    \left\{
        x\in\{0,1\}^n:
        \Pr(x\mid r)<\frac{2}{N},
        \ \Delta_r(x)^2\ge\frac{\tau}{Nr^2}
    \right\}.
    \label{eq:hp1_active_tail_set}
\end{equation}
Equivalently, the two scaled conditions used by the rare-event simulation are
\(N\Pr(x\mid r)<2\) and
\(\bigl[N\Delta_r(x)\bigr]^2r^2\ge\tau N\).
Restricting the DFI sum to this active set and applying its two defining
inequalities gives
\begin{equation}
    \fii(r,n)
    \ge \sum_{x\in A_\tau(n,r)}
    \frac{\Delta_r(x)^2}{\Pr(x\mid r)}
    \ge \frac{\tau}{2r^2}\,\lvert A_\tau(n,r)\rvert.
    \label{eq:hp1_active_tail_dfi_certificate}
\end{equation}
For completeness, Eq.~\eqref{eq:hp1_active_tail_dfi_certificate} also covers
zero denominators.  Both \(\Pr(x\mid r)\) and \(\Pr(x\mid r+1)\) are
probabilities and are therefore nonnegative.  If \(\Pr(x\mid r)>0\), we use
the fraction as usual.  If \(\Pr(x\mid r)=0\), the term is \(0\) when
\(\Pr(x\mid r+1)=0\) and \(+\infty\) when \(\Pr(x\mid r+1)>0\):
\begin{equation}
    \frac{\Delta_r(x)^2}{\Pr(x\mid r)}
    :=
    \begin{cases}
        \dfrac{\Delta_r(x)^2}{\Pr(x\mid r)},
        & \Pr(x\mid r)>0,\\[1mm]
        0,
        & \Pr(x\mid r)=\Pr(x\mid r+1)=0,\\[1mm]
        +\infty,
        & \Pr(x\mid r)=0<\Pr(x\mid r+1).
    \end{cases}
    \label{eq:hp1_exact_pearson_zero_convention}
\end{equation}
These are all the possible cases, so every term in the active-set sum is well
defined.  If \(\Pr(x\mid r)=\Pr(x\mid r+1)=0\), then
\(\Delta_r(x)=0\), so the large-numerator condition
\(\Delta_r(x)^2\ge\tau/(Nr^2)>0\) fails and the output is not in
\(A_\tau(n,r)\).  If \(\Pr(x\mid r)=0<\Pr(x\mid r+1)\) while the output is
active, its contribution is \(+\infty\), in which case the lower bound in
Eq.~\eqref{eq:hp1_active_tail_dfi_certificate} holds trivially.  Every
remaining active output has \(\Pr(x\mid r)>0\), where the ordinary pointwise estimate
\begin{equation}
    \frac{\Delta_r(x)^2}{\Pr(x\mid r)}
    \ge
    \frac{\tau/(Nr^2)}{2/N}
    =
    \frac{\tau}{2r^2}
    \label{eq:hp1_active_tail_finite_point_bound}
\end{equation}
is finite and applies directly.  Thus a zero denominator can only remove an
output from the active set or make the exact DFI infinite; it can never weaken
the counting certificate.  No \(\epsilon\)-cutoff or denominator
regularization is used in this argument.

\begin{figure}[t]
    \centering
    \includegraphics[width=0.8\linewidth]{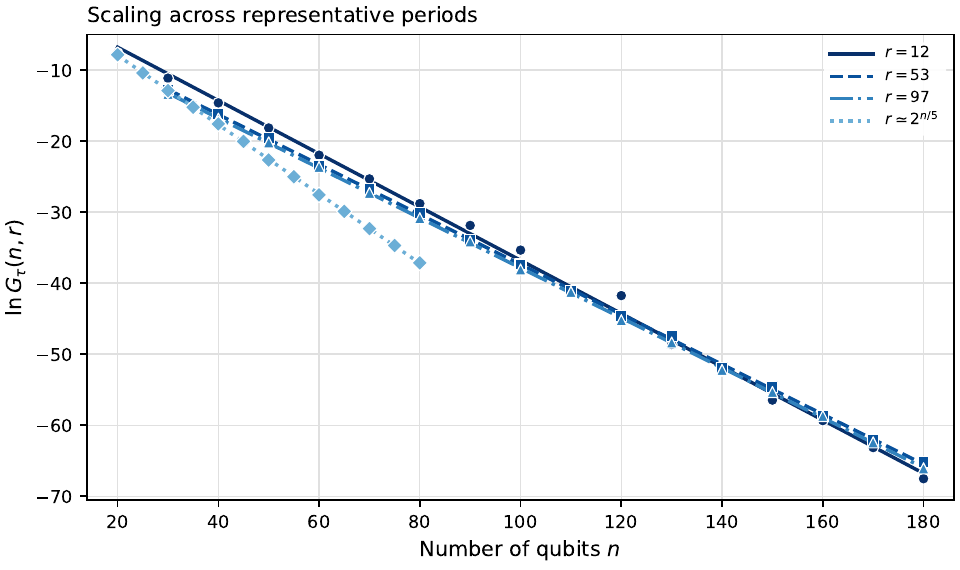}
    \caption{\label{fig:hp1_active_tail_linear_fits}
    \textbf{Log-linear scaling of the active-tail count.}
    Subset-simulation estimates of \(\ln G_\tau(n,r)\) for fixed
    \(r=12\) at \(n=20,30,\ldots,200\) (circles), and for eleven
    near-window-edge periods increasing from \(r=30\) to \(r=32000\) at
    \(n=20,24,\ldots,60\) (triangles).  Lines are
    ordinary least-squares fits to the corresponding points.  The fits
    summarize finite-window numerical evidence and are not uniform analytic
    bounds.}
\end{figure}

It remains to estimate the size of the rare event.  Introduce its normalized
count
\begin{equation}
    G_\tau(n,r)
    :=
    \frac{\lvert A_\tau(n,r)\rvert}{Nr^2}.
    \label{eq:hp1_active_tail_normalized_count}
\end{equation}
For the numerical rare-event estimate below, we now choose
\(\tau=3\times10^{-4}\).  The adaptive simulation evaluates this quantity by
subset simulation, using a log-domain roots-of-unity formula for the point
probabilities.  The
estimator was first calibrated against complete enumeration and ordinary
Monte Carlo in regimes where the event is directly resolvable.  For example,
the reference active fractions at \((n,r)=(20,12),(22,12),(30,12)\) are
\(0.06952095\), \(0.03225803\), and \(0.00213275\), respectively, while the
subset estimates lie in the corresponding ranges \(0.069\)--\(0.072\),
\(0.0312\)--\(0.0331\), and \(0.00212\)--\(0.00226\).

The resulting rare-event conclusion is the conservative statistical
lower-envelope law
\begin{equation}
    \min_{2\le r<2^{n/4}}G_\tau(n,r)
    \gtrsim
    c_A e^{-\beta n},
    \qquad
    c_A>0,
    \qquad
    \beta=0.55<\ln 2.
    \label{eq:hp1_active_tail_rare_event_law}
\end{equation}
Figure~\ref{fig:hp1_active_tail_linear_fits} visualizes three complementary
log-linear checks supporting this conservative lower-envelope choice.
Complete enumeration of the even non-dyadic part of the restricted window
gives \(\min_r\log G_\tau(n,r)=-0.499916n+1.49303\) over
\(n=14,\ldots,20\), with \(R^2=0.999554\).  For fixed \(r=12\), subset
simulation at ten-qubit intervals through \(n=200\) gives
\(\log G_\tau(n,12)=-0.375465n+0.789480\), with \(R^2=0.996496\).
To probe substantially larger periods, we additionally sample eleven points
at \(n=20,24,\ldots,60\), choosing even non-dyadic periods near the moving
window edge.  The periods increase from \(r=30\) to \(r=32000\), and each lies
between \(0.9375\) and \(0.9766\) of the upper limit \(2^{n/4}\).
These near-edge points give \(\log G_\tau=-0.526862n+2.04513\), with
\(R^2=0.999942\), and remain above \(-0.55n\) by a minimum log margin of
\(2.546\).  The near-edge sweep stops at \(n=60\) because the exact point
query scales as \(O(n[u_r+u_{r+1}])\), where \(u_s\) is the odd part of
\(s\); along \(r\simeq2^{n/4}\), this cost doubles approximately every four
qubits.  Representative
bounded-odd-part period checks at \(n=200\), together with the exact-window
data, also remain above the conservative envelope; the smallest observed log
margin over all reported checks is \(1.741\).
Equation~\eqref{eq:hp1_active_tail_rare_event_law} extrapolates this observed
lower envelope uniformly over the period window in the theorem.  It is the
explicit empirical closure of the argument, not an independent analytic
bound.

Combining Eqs.~\eqref{eq:hp1_active_tail_dfi_certificate} and
\eqref{eq:hp1_active_tail_normalized_count} yields
\begin{equation}
    \fii(r,n)
    \ge
    \frac{\tau N}{2}G_\tau(n,r).
    \label{eq:hp1_active_tail_dfi_normalized}
\end{equation}
Taking the minimum over the restricted period window and applying
Eq.~\eqref{eq:hp1_active_tail_rare_event_law}, we obtain
\begin{align}
    \fii_{\min}(n)
    &=
    \min_{2\le r<2^{n/4}}\fii(r,n) \nonumber\\
    &\gtrsim
    \frac{\tau c_A}{2}
    \exp\!\left[(\ln 2-\beta)n\right]
    =
    \exp(kn+b),
    \label{eq:hp1_restricted_window_ln_linear_bound}
\end{align}
where
\begin{equation}
    k:=\ln 2-\beta>0,
    \qquad
    b:=\ln\!\left(\frac{\tau c_A}{2}\right).
    \label{eq:hp1_active_tail_k_b}
\end{equation}
For the conservative choices \(c_A=1\) and \(\beta=0.55\), this is the
explicit simulation-supported bound
\begin{equation}
    \fii_{\min}(n)
    \gtrsim
    1.5\times10^{-4}\exp(0.143147\,n),
    \label{eq:hp1_active_tail_explicit_bound}
\end{equation}
which has the form stated in the theorem.
\end{proof}

\section{Log-linear growth fit for the numerical Fisher information}
\label{app:fi_log_linear_growth}

The fixed-$r$ QFT benchmark derived above clarifies the dominant scaling mechanism, but Fig.~\ref{fig:fi_scaling} probes a slightly different quantity. There, the vertical axis is the minimum discrete Fisher information over the period window used in the numerical scan. To maintain the asymptotic separation
\(
1 \ll r \ll 2^n
\)
as the system size increases, we use the scaling period window
\(
2 \le r \le 2^{n/2}.
\)
The relevant QFT comparator is therefore the worst-case Fisher information within this $n$-dependent window, rather than the value at a single fixed period.

The leading term in Lemma~\ref{thm:qft_fisher_information} decreases with $r$, since in the sampled regime
\(
R \approx 2^n/r.
\)
Consequently, the minimum QFT Fisher information within the scan window is attained near the largest allowed period. Taking
\(
r_{\max}(n)=2^{n/2}
\)
gives the window-matched estimate
\begin{equation}
    \fii_{\mathrm{QFT,min}}(n)
    \approx
    \frac{4\pi^2}{9}
    \left(
        \left\lfloor
        \frac{2^n}{2^{n/2}}
        \right\rfloor^2
        -1
    \right).
\end{equation}
Keeping only the dominant exponential dependence therefore gives
\begin{equation}
    \ln \fii_{\mathrm{QFT,min}}(n)
    =
    n\ln 2+\mathcal{O}(1).
\end{equation}
Thus, under the same period-window scaling used in the numerical scan, the asymptotic QFT exponent approaches
\(
k_{\mathrm{QFT},\infty}=\ln 2.
\)

This provides a natural benchmark for the numerical data in Fig.~\ref{fig:fi_scaling}. If the plotted curves exhibit an approximately exponential dependence on the qubit number over the sampled range, then their logarithms should be well described by
\begin{equation}
    \ln \fii_{\min}(n)=kn+b.
\end{equation}
We therefore perform ordinary least-squares fits to the log-transformed mean values of each curve shown in Fig.~\ref{fig:fi_scaling}. Table~\ref{tab:fi_log_regression_results} summarizes the resulting effective exponents. We do not reproduce a separate log-transformed plot because it contains the same data points as Fig.~\ref{fig:fi_scaling}.

\begin{table}[h]
    \centering
    \begin{tabular}{l c c c c}
        \hline
        Circuit & qubits & $k$ & $95\%$ CI & $R^2$ \\
        \hline
        \myqc-1 fixed & $7$--$18$ & $0.368$ & $[0.324,\,0.413]$ & $0.972$ \\
        \myqc-1 shared optimized & $7$--$18$ & $0.379$ & $[0.354,\,0.403]$ & $0.992$ \\
        \myqc-$L$ random & $7$--$18$ & $0.398$ & $[0.383,\,0.414]$ & $0.997$ \\
        \myqc-$L$ random with phase & $7$--$18$ & $0.352$ & $[0.336,\,0.367]$ & $0.996$ \\
        \hline
    \end{tabular}
    \caption{\label{tab:fi_log_regression_results} Effective exponents obtained from the log-linear model $\ln \fii_{\min}(n)=kn+b$ applied to the numerical Fisher-information curves in Fig.~\ref{fig:fi_scaling}.}
\end{table}

The fitted lines account for most of the log-scale variance across the sampled window. The coefficient of determination ($R^2$) ranges from $0.972$ to $0.997$, the log-space root-mean-square residual lies between $0.075$ and $0.217$, and every fitted slope ($k$) is significantly positive with a $p$-value $\leq 4.6\times 10^{-9}$. In that sense, a linear regression on $\ln I_{\min}$ is justified as a finite-window descriptive model for these curves. At the same time, the fitted slope $k$ should be interpreted as an effective exponent on the restricted range $n=7,\dots,18$ (where $n$ is the number of qubits), not as a claim about the ultimate asymptotic law.

The comparison with the QFT benchmark is unambiguous. Even after matching the period window used in the numerical scan, the QFT reference fit remains much steeper, with $k_{\mathrm{QFT,win}}=1.089$ and $R^2=0.999$. The empirical circuit families therefore grow substantially more slowly: their fitted exponents occupy $32\%$--$37\%$ of the window-matched QFT slope, and $25\%$--$29\%$ of the asymptotic QFT exponent $2\ln 2$. Among the plotted families, the random-layer \myqc{} circuit has the largest point estimate, near $k=0.398$, followed by the shared optimized \myqc-1 circuit, near $k=0.379$.

This comparison sharpens the interpretation of Fig.~\ref{fig:fi_scaling}. The \myqc{} family circuits retain a clear exponential-in-$n$ growth trend in the $\fii_{\text{min}}$ over the sampled window, which is why the log-linear fit works well, but they do not reproduce the QFT growth rate. Their advantage in the present setting is therefore not that they match the full QFT benchmark asymptotically, but that they retain a nontrivial and practically usable fraction of that information while remaining much shallower.

\section{Decoder Model}
\label{sec:AppendixDecoder}
\begin{figure}[!h]
    \centering
    \includegraphics[width=0.9\linewidth]{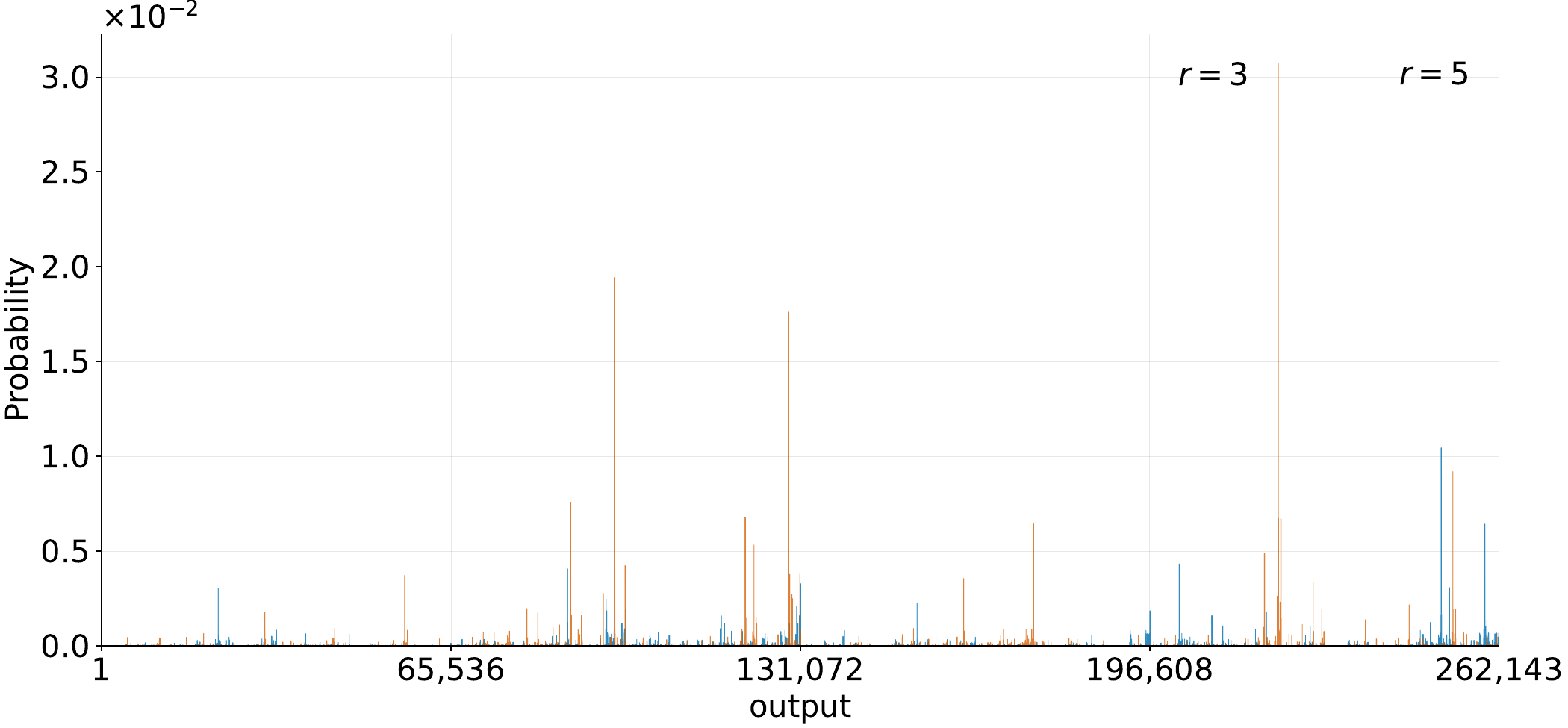}
    \caption{\textbf{Period-dependent HP output distributions.} Output
    probabilities for two hidden periods, \(r=3\) and \(r=5\).  The
    period-dependent peak structure shows that the measurement samples retain
    information about \(r\), while the irregular peak locations do not provide an
    obvious closed-form decoding rule.}
    \label{fig:decoder_output_distributions}
\end{figure}
The measurement distribution produced by an HP circuit depends on the hidden period.  For example, Fig.~\ref{fig:decoder_output_distributions} shows the output distributions for two different periods, \(r=3\) and \(r=5\).  The locations and weights of the prominent peaks change with \(r\), demonstrating that the samples retain information about the hidden generator.  Unlike the standard QFT distribution, however, these peaks do not exhibit a simple number-theoretic pattern that can be directly converted into a continued-fraction estimate.  The irregular structure makes a compact analytical decoder difficult to construct.

An ideal generic decoder would be the maximum-likelihood estimator (MLE).  Let \(B_m=\{b_i\}_{i=1}^{m}\) denote \(m\) independent measurement outcomes, with \(b_i\in\{0,1\}^{n}\), and let \(\Pr(b\mid r)\) be the output probability of the HP circuit for a candidate period \(r\).  The MLE is
\begin{equation}
    \widehat r_{\mathrm{ML}}
    =
    \underset{r\in\mathcal R}{\operatorname{arg\,max}}\;
    \prod_{i=1}^{m}\Pr(b_i\mid r)
    =
    \underset{r\in\mathcal R}{\operatorname{arg\,max}}\;
    \sum_{i=1}^{m}\log \Pr(b_i\mid r),
    \label{eq:decoder_mle}
\end{equation}
where \(\mathcal R\) is the candidate-period set.  Under the usual regularity conditions for likelihood estimation, the Fisher information determines the local asymptotic precision of the estimator, so its inverse sets a local sample--precision scale~\cite{Paris2009,Kwon2026}. The samples we need for a target standard error \(\epsilon\) is
\begin{equation}
    m\sim \frac{1}{\fii(r,n)\,\epsilon^2}.
\end{equation}
For the discrete period problem, the global error exponent additionally depends on the separation between the full distributions \(\Pr(\cdot\mid r)\) for different \(r\); therefore the DFI alone should not be interpreted as a proof of an achievable decoding error exponent.  The empirical MLE failure rates shown in Fig.~\ref{fig:decoder_mle} decrease rapidly as the number of samples increases, and are approximately exponential over the displayed range.  This observation provides a useful numerical benchmark for the amount of information available in the HP samples, but it is not an asymptotic theorem for all periods or circuit families.

\begin{figure}[!h]
    \centering
    \includegraphics[width=0.9\linewidth]{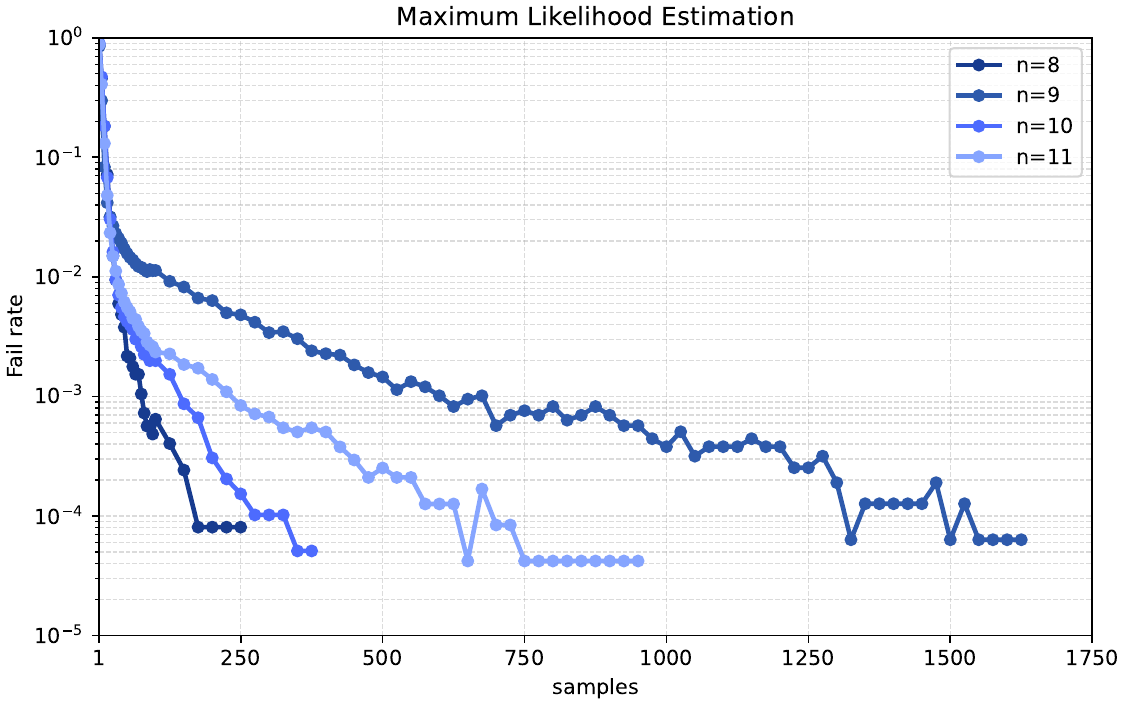}
    \caption{\textbf{Maximum-likelihood decoding benchmark.} Empirical failure
    rate of the MLE as a function of the number of measurement samples for
    \(n=8,9,10,\) and \(11\).  The rapid decrease in failure rate illustrates
    the improvement obtained from additional samples in the tested regime.}
    \label{fig:decoder_mle}
\end{figure}

The MLE is not the decoder used in the main numerical protocol.  Evaluating
Eq.~\eqref{eq:decoder_mle} requires computing \(\Pr(b_i\mid r)\) for every observed bit string and every candidate period.  A direct construction of the complete output distribution contains \(2^n\) probabilities and requires repeated circuit simulation, which is prohibitive in the regime of interest.  A specialized likelihood evaluation with lower cost is possible in principle, but its implementation and complexity are not established here.  We therefore replace the explicit likelihood calculation by a learned estimator that acts directly on the sampled bit strings.

The decoder used in the \(18\)-qubit experiments is a weighted, permutation-invariant Deep Sets model \cite{zaheer2017deep,edwards2016towards}.  We use this architecture because the measurement record is an unordered multiset: permuting the sampled bit strings must not change the decoded period.  It also avoids constructing an explicit frequency vector over the \(2^n\) possible bit strings or a separate logit for every possible period, either of which would make the decoder representation grow exponentially with the problem size.  Instead, it embeds only the observed strings and aggregates them into a fixed-dimensional representation.

Let \(d_e=\max(8,n)\) and \(d=16n\).  For bit position \(j\), the measured value \(b_{ij}\) and its position have independently learned embeddings \(E_{\mathrm{val}}(b_{ij}),E_{\mathrm{pos}}(j)\in\mathbb R^{d_e}\).  Their sum is concatenated over the \(n\) positions, layer-normalized, and projected from \(nd_e\) to \(d\) dimensions, followed by layer normalization and a GELU activation.  This representation is processed by three residual multilayer perceptrons (MLPs).  In the implementation, each MLP block applies layer normalization, a linear expansion from \(d\) to \(2d\), GELU, dropout, a linear projection back to \(d\), and a second dropout, then adds the result to its input through a residual connection:
\begin{equation}
    \mathcal B(h)
    =h+D\!\left[W_2D\!\left(\operatorname{GELU}
      \left(W_1\operatorname{LN}(h)+c_1\right)\right)+c_2\right],
    \qquad
    W_1:\mathbb R^d\to\mathbb R^{2d},\quad
    W_2:\mathbb R^{2d}\to\mathbb R^d,
    \label{eq:decoder_residual_block}
\end{equation}
where \(D\) denotes dropout.  The resulting shared map is denoted by
\(\phi:\{0,1\}^n\to\mathbb R^d\).

Repeated outcomes are stored once together with their counts.  If
\(\{\widetilde b_u\}_{u=1}^{U}\) are the distinct observed strings and \(c_u\)
are their multiplicities, the invariant representation is the count-weighted
mean
\begin{equation}
    z(B_m)=\sum_{u=1}^{U}w_u\phi(\widetilde b_u),
    \qquad
    w_u=\frac{c_u}{\sum_{v=1}^{U}c_v}.
    \label{eq:decoder_deep_sets}
\end{equation}
This is exactly the ordinary mean \(m^{-1}\sum_i\phi(b_i)\) without explicitly
repeating identical strings; zero-weight padding permits batching sets with
different numbers of distinct outcomes.  After another layer normalization,
two more residual blocks of the form in
Eq.~\eqref{eq:decoder_residual_block} act on \(z(B_m)\), producing the pooled
encoder representation \(h\).

During training, the target period is converted to its nibble sequence and the
network is optimized with teacher forcing.  If \(t_{1:T}(r)\) denotes the
sequence for period \(r\), the token loss is
\begin{equation}
    \mathcal L_{\mathrm{tok}}
    =-\frac{1}{T}\sum_{\ell=1}^{T}
      \log\operatorname{softmax}(q_\ell)_{t_\ell(r)},
    \label{eq:decoder_token_loss}
\end{equation}
with label smoothing in the numerical run.  At inference, beam search retains
at most four prefixes.  A prefix \(t_{1:\ell}\) is scored by the accumulated
conditional log-probability
\begin{equation}
    S(t_{1:\ell})
    =\sum_{j=1}^{\ell}
      \log\operatorname{softmax}(q_j)_{t_j},
    \label{eq:decoder_beam_score}
\end{equation}
and invalid prefixes outside the period range are removed before each beam pruning step.  The returned top-\(k\) list therefore has \(k\le4\), and its entries are decoded period integers, not indices of a 510-dimensional output layer.

\begin{figure}[!h]
    \centering
    \includegraphics[width=0.6\linewidth]{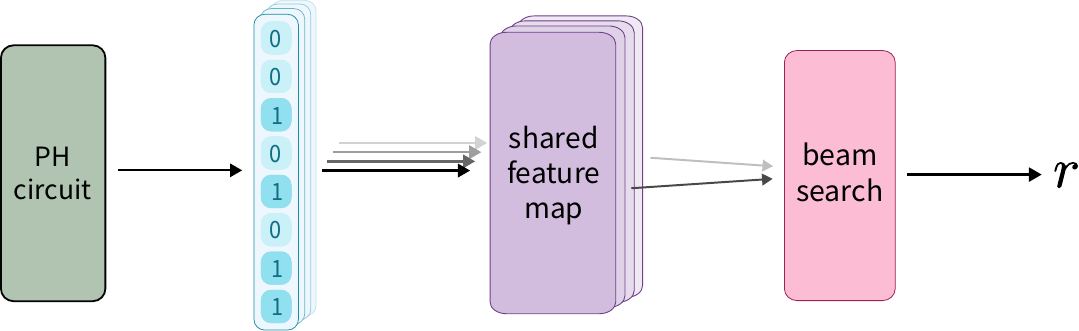}
    \caption{\textbf{Neural decoder architecture.} The HP circuit produces an
    unordered set of measurement bit strings.  A shared feature map processes
    every bit string, the resulting features are aggregated into a
    permutation-invariant representation, and beam search produces a ranked list of candidate periods.}
    \label{fig:decoder_architecture}
\end{figure}

The use of a neural decoder should be distinguished from an analytical decoding theorem.  The DFI characterizes the sensitivity of the HP output distribution to the hidden period and is therefore used here as an information-theoretic resolution scale~\cite{weimar2025fisher}.  Once the likelihood is replaced by a trained neural estimator, the DFI is not an asymptotically achieved bound for that estimator; the actual decoder performance must be established empirically through validation accuracy, top-\(k\) recovery, and the factorization tests.  In this sense, the neural network demonstrates that the information diagnosed by the DFI is practically accessible in the tested regime only.

\section{Numerical shift invariance for general periods}
\label{app:shift_invariant_zp}

The analytical result above establishes exact shift invariance for \(\mathbb{Z}_{2^n}\).  To test the finite arithmetic progressions associated with general periods \(r\), we numerically compare the standard \(2^n\)-point QFT with the fixed optimized \myqc-1 circuit.  We use \(n=18\) and periods \(r=1,\ldots,10\).  For each period, the shift is denoted by \(c\in\{0,\ldots,r-1\}\), and we use equal-cardinality truncated periodic input states
\begin{equation}
    \ket{\psi_{r,c}}
    =
    \frac{1}{\sqrt{R_r}}\sum_{q=0}^{R_r-1}\ket{c+qr},
    \qquad
    R_r=\left\lfloor\frac{2^n}{r}\right\rfloor .
    \label{eq:shift_invariance_periodic_state}
\end{equation}
For non-dyadic \(r\), these finite arithmetic progressions are not subgroups of \(\mathbb{Z}_{2^n}\), so this experiment probes the finite-period extension beyond the exact subgroup setting.

We quantify the shift dependence with the base-2 Jensen--Shannon divergence (JSD).  For a circuit \(U\), define \(\Pr_U(x\mid r,c)=|\langle x|U|\psi_{r,c}\rangle|^2\), where \(U\) is either the QFT or the fixed optimized \myqc-1 circuit.  We report \(\max_{0\le c<r}\operatorname{JSD}(\Pr_U(\cdot\mid r,0), \Pr_U(\cdot\mid r,c))\) for both circuits.  As a scale reference, the Uniform row reports the JSD between the unshifted \myqc-1 output distribution and the uniform distribution.  Identical distributions have JSD equal to zero, so smaller values indicate closer shift invariance.

\begin{table}[h]
    \color{revisioncolor}
    \centering
    \small
    \setlength{\tabcolsep}{4pt}
    \begin{tabular}{l *{10}{c}}
        \hline
        \(r\) & 1 & 2 & 3 & 4 & 5 & 6 & 7 & 8 & 9 & 10 \\
        \hline
        QFT (shift) & 0 & 0 & \(2.02\times10^{-6}\) & 0 & \(4.70\times10^{-6}\) & \(4.04\times10^{-6}\) & \(6.89\times10^{-6}\) & 0 & \(8.96\times10^{-6}\) & \(8.74\times10^{-6}\) \\
        \myqc-1 (shift) & 0 & 0 & 0.0041 & 0 & 0.0618 & 0.0040 & 0.0256 & 0 & 0.0324 & 0.0617 \\
        Uniform baseline & 1 & 0.999 & 0.781 & 0.999 & 0.726 & 0.769 & 0.742 & 0.997 & 0.724 & 0.715 \\
        \hline
    \end{tabular}
    \caption{\label{tab:shift_invariance_jsd}\textbf{Numerical shift-invariance check.}
    Base-2 JSD values at \(n=18\) for \(r=1,\ldots,10\).  The shift rows
    compare \(c=0\) with all \(0\le c<r\) and report the maximum.  The QFT
    entries are reported in scientific notation from the exact-support
    calculation.  The uniform baseline is the JSD between the unshifted
    \myqc-1 output and the uniform distribution.}
\end{table}

The QFT row is numerically small but nonzero, at approximately the \(10^{-6}\) scale in this finite-precision calculation.  Compared with the QFT, \myqc-1 is measurably farther from shift invariance for the non-dyadic periods, with a largest shift JSD of \(0.0618\), but these deviations remain small: the largest shift JSD is only \(8.63\%\) of the corresponding \myqc-1-to-uniform JSD.  Thus \myqc-1 is less shift-invariant than the QFT outside the exact \(\mathbb{Z}_{2^n}\) setting, while its absolute shift dependence remains modest in this numerical regime.

\section{Noise-robustness experiment}
\label{app:noise_robustness_details}

\begin{figure*}[h]
    \centering
    \includegraphics[width=0.78\textwidth]{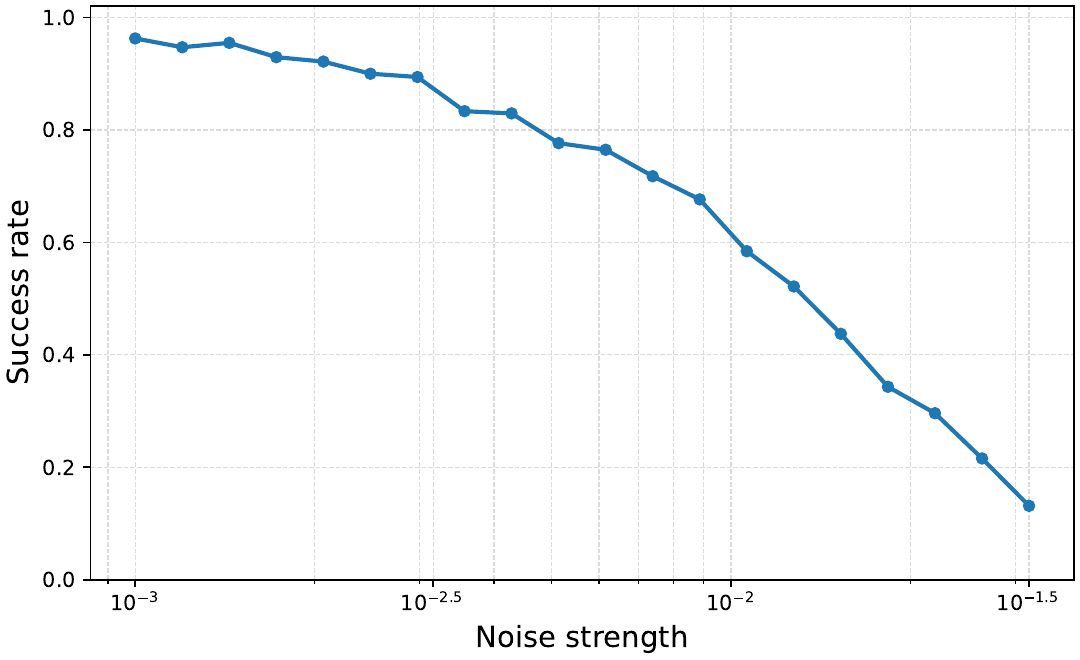}
    \caption{\textbf{Gate-level noise robustness.} Top-1 period-recovery accuracy of the fixed 18-qubit \myqc-1 neural decoder under stochastic Pauli errors inserted after every Hadamard and controlled-phase gate. The sweep evaluates \(510\) candidate periods, averages four noisy-circuit trajectories per period, and uses \(331776=1024n^2\) samples at \(n=18\).}
    \label{fig:gate_noise_accuracy}
\end{figure*}

We test the robustness of the trained \(18\)-qubit \myqc-1 neural-decoder
pipeline under gate-level stochastic Pauli noise. Throughout the noise
sweep, both the optimized circuit phases and the neural-network weights
are kept fixed, so that the results directly probe the robustness of the
trained pipeline without retraining or noise-aware calibration.

We model gate errors by independent local depolarizing channels applied
after each Hadamard and controlled-phase gate. Gate-local Pauli channels
provide a widely used effective description of noise in present-day
quantum processors. In particular, experiments on superconducting
devices have characterized gate noise using sparse local
Pauli--Lindblad models~\cite{vandenBerg2023}, and Pauli twirling or
randomized compiling can further tailor coherent errors into stochastic
Pauli channels~\cite{vandenBerg2023,Will2025}. In a recent Sycamore/Willow
experiment, for example, randomized-compiled CZ-gate errors were modeled
by two-qubit depolarizing channels with error rates obtained directly
from cross-entropy benchmarking~\cite{Will2025}.

Our choice of noise model is intentionally simpler than a device-specific experimental
noise model: a common noise strength \(\eta\) is used to provide a
controlled sweep rather than fitting separate error rates for individual
gates and qubits. More detailed hardware models additionally include
relaxation, readout errors, leakage, and correlated or crosstalk
errors~\cite{GoogleQAI2023,Bausch2024}. Indeed, comparisons with
superconducting-device data have shown that such effects can produce
correlations beyond those captured by a purely local depolarizing
model~\cite{GoogleQAI2023}. We therefore use the present model as a
standard gate-level robustness benchmark rather than as a complete
reconstruction of a particular hardware noise channel.

We evaluate the same \(510\) candidate periods \(r\in\{2,\ldots,511\}\) used in the period-recovery experiment. For each period and noise strength, the decoder receives \(1024n^2=331776\) measurement samples at \(n=18\). We consider \(20\) geometrically spaced noise strengths over
\begin{equation}
    \eta\in[10^{-3},10^{-1.5}],
\end{equation}
where the lower end probes the \(10^{-3}\)-level gate-error regime relevant to recent high-fidelity quantum processors, while the upper end extends deliberately into a stronger-noise regime to resolve the breakdown of period recovery~\cite{Evered2023,GoogleQAI2025}. The geometric spacing provides finer resolution in the low-noise regime. For each noise strength, we average over four noisy-circuit
realizations per period. 

The resulting top-1 period-recovery accuracy is shown in Fig.~\ref{fig:gate_noise_accuracy}. The accuracy decreases monotonically with increasing noise, from \(0.9059\) at \(\eta=10^{-3}\) to \(0.1412\) at \(\eta=10^{-1.5}\). It remains above \(0.84\) up to \(\eta=2.98\times10^{-3}\), and is \(0.7118\) at \(\eta=6.16\times10^{-3}\). These results show that the trained \myqc-1 decoder retains substantial period-recovery performance in the weak-noise regime, followed by a gradual degradation as local gate errors accumulate.

\end{document}